\newtheorem{theorem}{Theorem}
\newtheorem{remark}{Remark}
\newtheorem{definition}{Definition}
\newtheorem{lemma}{Lemma}
\newtheorem{corollary}{Corollary}
\newtheorem{proof}{Proof}
\newcounter{MYtempeqncnt}
\begin{document}
\title{Privacy-Constrained Remote Source Coding}
\IEEEoverridecommandlockouts
\author{
\authorblockN{Kittipong Kittichokechai and Giuseppe Caire\\
\authorblockA{Technische Universit\"{a}t Berlin}
}
}
\maketitle
\begin{abstract}
 We consider the problem of revealing/sharing data in an efficient and secure way via a compact representation. The representation should ensure reliable reconstruction of the desired features/attributes while still preserve privacy of the secret parts of the data. The problem is formulated as a remote lossy source coding with a privacy constraint where the remote source consists of public and secret parts.  Inner and outer bounds for the optimal tradeoff region of compression rate, distortion, and privacy leakage rate are given and shown to coincide for some special cases. When specializing the distortion measure to a logarithmic loss function, the resulting rate-distortion-leakage tradeoff for the case of identical side information forms an optimization problem which corresponds to the ``secure" version of the so-called {\em information bottleneck}.
\end{abstract}
\section{Introduction}\label{sec:introduction}
With the prominence of the Internet and the rise of the Internet of Things (IoT), significant amount of data are being generated, stored, and exchanged over the networks. Proper data management has become one of the most important and challenging aspects in system design. Information contained in the data are usually valuable resources that can be harnessed.
However, the extensive use of data incurs some privacy risks especially when sensitive information is involved. 
The ultimate goal is to utilize the data to its full extent while still preserving  privacy of the sensitive information.

Sankar et al. \cite{srpUPTI13} and du Pin Calmon and Fawaz \cite{dfPASI12}  studied the utility-privacy tradeoff from an information theoretic perspective, by relating to the framework of secure lossy source coding \cite{yASCP83,yCTFS94,prOSDS07,gepLCWS08,vpSMSC13,scRDTS14}. Inspired by these works, we consider a problem of secure \emph{remote} source coding where the remote source consists of \emph{public} and \emph{secret} parts (hidden information associated with the data). The legitimate receiver and eavesdropper are assumed to have access to the compact representation of the data as well as separate side information. The goal is to extract public attribute/feature of the data at the legitimate receiver from the compact representation satisfying a distortion criterion, while ensuring a low amount of information leakage of secret attribute/feature of the data to the eavesdropper. Similarly as in \cite{srpUPTI13}, we capture utility of the data by the reconstruction distortion of the public part at the legitimate receiver, and capture the privacy leakage by the normalized mutual information between the secret part and the eavesdropper observation. The compression rate is also considered as a practical constraint on limited storage. We wish to characterize the optimal tradeoff region of the compression rate, incurred distortion, and  privacy leakage rate. In this work, inner and outer bounds to the optimal tradeoff are given and shown to be tight for some special cases. The results can be relevant for data sharing scenarios  with (external) attacks on sensitive information. 

In the same spirit as how the the rate-distortion theorem under logarithmic loss  \cite{mfUP98,cwMSCU14,msfmFIBT14} is related to the information bottleneck problem \cite{tpbTIBM00}, in this work, the rate-distortion-privacy leakage tradeoff under logarithmic loss distortion corresponds to the secure version of information bottleneck.
It extends the information bottleneck by including the privacy constraint or alternatively it extends the \emph{dual} privacy funnel problem \cite{msfmFIBT14} by including the compression rate constraint. The secure remote source coding problem under logarithmic loss distortion therefore gives an operational meaning to the secure information bottleneck problem. 

\begin{figure}[]
	\centering
	\psfrag{x}[][][0.9]{\small{$X^{n}$}}
	\psfrag{y}[][][0.9]{\small{$Y^n$}}
	\psfrag{z}[][[0.9]{\small{$Z^n$}}
	\psfrag{yps}[][][0.9]{\small{$(Y_p^n,Y_s^n)$}}
	\psfrag{yha}[][][0.9]{\small{$\hat{Y}_p^n$}}
	\psfrag{d}[][][0.9]{\small{$E[d(Y_p^n,\hat{Y}_p^n)] \leq D$}}
	\psfrag{w}[][][0.9]{\small{$W$, rate $R$}}
    \psfrag{px}[][][0.9]{\small{$P_{X,Y,Z|Y_p,Y_s}$}}
	\psfrag{enc}[][][0.9]{\small{Encoder}}
	\psfrag{dec}[][][0.9]{\small{Decoder}}
	\psfrag{eve}[][][0.9]{\small{Eaves.}}
	\psfrag{L}[][][0.9]{\small{$\frac{1}{n} I(Y_s^{n};W,Z^{n}) \leq  L$}}
	\includegraphics[width=0.45\textwidth]{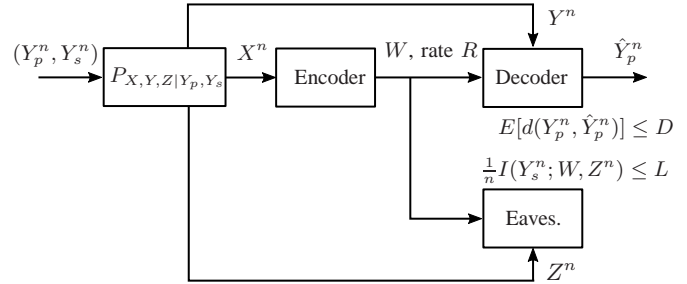}
	\caption{Privacy-constrained remote lossy source coding.}\label{fig:model}
\end{figure}
Our problem is
closely related to works on lossy source coding with a privacy constraint. For instance, Yamamoto considered secure source coding with remote sources to be extracted and protected \cite{yASCP83,yCTFS94}. 
Villard and Piantanida \cite{vpSMSC13} considered  lossy secure source coding  where the source sequence is reconstructed at the decoder satisfying a distortion criterion while limiting the leakage of the source to an eavesdropper below a certain level. Tandon et al. \cite{tspDLSC13} considered privacy of side information at the first decoder against a secondary decoder in the Heegard-Berger setting \cite{hbRDSA85}.  
\cite{srpUPTI13} and \cite{msITPI15} considered one-way and interactive data sharing settings where the source consists of public and private parts and characterized the set of all achievable distortion-leakage pairs. 
The main differences of our work from \cite{srpUPTI13} are the presence of an eavesdropper with correlated side information and the fact that public and private information are considered as remote sources.

\textit{Notation}: We denote the discrete random variables, their
corresponding realizations or deterministic values, and their
alphabets by the upper case, lower case, and calligraphic letters,
respectively. The term $X_{m}^{n}$ denotes the sequence
$\{X_{m},\ldots,X_{n}\}$ when $m\leq n$, and the empty set
otherwise. Also, we use the shorthand notation $X^{n}$ for
$X_{1}^{n}$. The term $X^{n\setminus i}$ denotes the set
$\{X_{1},\ldots,X_{i-1},X_{i+1},\ldots,X_{n}\}$. Cardinality of the
set $\mathcal{X}$ is denoted by $|\mathcal{X}|$. Notation $[1:K]$ denotes the set $\{1,2,\dots,K\}$.  Finally, we use
$X-Y-Z$ to denote that  $(X,Y,Z)$ forms a Markov chain. 
Other notations follow the standard ones in \cite{ekNIT11}.

\section{Secure Remote Source Coding}

\subsection{Problem Formulation}\label{sec:problem_setting}
Let us consider a secure remote source coding shown in Fig.~\ref{fig:model}. Source and side information alphabets, $\mathcal{X}, \mathcal{Y}_p,\mathcal{Y}_s, \mathcal{Y},\mathcal{Z}$ are finite sets. Let $(X^n,Y_p^n,Y_s^n,Y^n,Z^n)$ be $n$-length sequences which have i.i.d. components distributed according to some fixed distribution $P_{X,Y_p,Y_s,Y,Z}$.

The sequence $X^n$ represents the data to be revealed or shared. Public and secret attributes/features associated with the data (but not accessible/allowed to be processed directly) are represented by $Y_p^n$ and $Y_s^n$, respectively. The  rate-limited description $W$  is generated based on $X^n$. 
 The decoder reconstructs the public attribute of the data based on $W$ and correlated side information $Y^n$. For generality, we consider an eavesdropper which has access to the (publicly) stored description and another correlated side information $Z^n$.  The secure remote source coding should ensure the reconstruction quality of the public attribute within a prescribed distortion, and at the same time preserve privacy of the secret part by limiting the amount of information leakage rate at the eavesdropper $\frac{1}{n}I(Y_s^n;W,Z^n)$. We note that if $Z=Y$, the problem  reduces to the case where we impose a privacy constraint against the legitimate receiver.

Let $d: \mathbb{R}\times\mathbb{R} \rightarrow [0,\infty)$ be a distortion measure. The distortion between $Y_p^n$ and its reconstruction $\hat{Y}_p^n$ is defined as
\[d^{(n)}(Y_p^n,\hat{Y}_p^n)= \frac{1}{n}\sum_{i=1}^nd(Y_{p,i},\hat{Y}_{p,i}).\]

We are interested in characterizing the optimal tradeoff of the compression rate, incurred average distortion at the legitimate decoder, and information leakage rate at the eavesdropper.

\begin{definition}\label{def:code_mFAP}
A $(|\mathcal{W}^{(n)}|,n)$-code for secure remote  source coding consists of
\begin{itemize}
  \item an encoder $f^{(n)}: \mathcal{X}^{n} \rightarrow \mathcal{W}^{(n)}$,
  \item a decoder $g^{(n)}: \mathcal{W}^{(n)} \times \mathcal{Y}^{n} \rightarrow \mathcal{\hat{Y}}_p^{(n)}$. 
  \hfill $\lozenge$
\end{itemize}
\end{definition}

\begin{definition}  A rate-distortion-leakage tuple $(R,D,L) \in \mathbb{R}^{3}_{+}$ is said to be \emph{achievable} if, for any $\delta>0$ there exists a sequence of $(|\mathcal{W}^{(n)}|,n)$-codes such that, for all sufficiently large $n$,  
\begin{align}
\frac{1}{n}\log\big|\mathcal{W}^{(n)}\big| &\leq R+\delta,\label{eq:rate_constraint}\\
E[d^{(n)}(Y_p^n,g^{(n)}(W,Y^n))] &\leq D+\delta,\\
\frac{1}{n}I(Y_s^n;W,Z^n) &\leq L+\delta,\label{eq:leakage_constraint}
\end{align}
where $W = f^{(n)}(X^n)$. The \emph{rate-distortion-leakage} region $\mathcal{R}$ is defined as the closure of the set of all achievable tuples.\hfill $\lozenge$
\end{definition}

\subsection{Results}
\begin{theorem}[Inner bound]\label{theorem:inner}
	An inner bound to the rate-distortion-leakage region $\mathcal{R}_{in}$ is given as a set of all tuples $(R,D,L) \in \mathbb{R}_{+}^3$ satisfying 
	\begin{align}
	R &\geq I(X;V|Y) \label{eq:rate}\\
	D&\geq E[d(Y_p,g(V,Y))]\label{eq:distortion}\\
	L &\geq I(Y_s;V,Y) + I(Z;X,Y_s|U)-I(Y;X,Y_s|U) \nonumber\\
	&\qquad -I(X;Z|V,Y_s,Y)+I(X;Y|Y_s,Z),\label{eq:leakage}
	\end{align}
	for some $P_{X,Y_p,Y_s,Y,Z}$$P_{V|X}P_{U|V}$ and $g: \mathcal{V} \times \mathcal{Y} \rightarrow \hat{\mathcal{Y}}_p$ with $|\mathcal{U}|\leq |\mathcal{X}|+3$ and $|\mathcal{V}|\leq (|\mathcal{X}|+3)(|\mathcal{X}|+2)$.
\end{theorem}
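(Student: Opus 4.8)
The plan is to establish achievability via a \emph{superposition} codebook combined with Wyner--Ziv binning, so that the decoder recovers both auxiliary sequences while paying only the binning rate $I(X;V|Y)$, and then to carry out a careful leakage analysis that assembles the five-term bound in \eqref{eq:leakage}. First I would fix $P_{V|X}$, $P_{U|V}$ and the reconstruction map $g$, and generate the codebook in two layers consistent with the Markov chain $U-V-X-(Y_p,Y_s,Y,Z)$: cloud centers $u^n(m_0)$, $m_0\in[1:2^{nR_0}]$, drawn i.i.d.\ from $P_U$ with $R_0\approx I(X;U)$, and for each $m_0$ satellites $v^n(m_0,m_1)$, $m_1\in[1:2^{nR_1}]$, drawn conditionally i.i.d.\ from $P_{V|U}(\cdot|u_i(m_0))$ with $R_1\approx I(X;V|U)$. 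The indices are then binned, $m_0$ into $2^{n\bar R_0}$ bins and $m_1$ into $2^{n\bar R_1}$ bins. On observing $X^n$, the encoder uses joint-typicality (covering-lemma) encoding to pick $(m_0,m_1)$ with $(u^n(m_0),v^n(m_0,m_1),X^n)$ jointly typical, which succeeds with high probability for the chosen layer rates, and transmits the bin indices $W=(b_0,b_1)$.

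Next, the decoder, having $W$ and side information $Y^n$, recovers the cloud center and then the satellite by joint-typicality decoding within the signalled bins. The standard Wyner--Ziv packing argument gives reliable recovery provided $\bar R_0\gtrsim I(U;X|Y)$ and $\bar R_1\gtrsim I(V;X|U,Y)$; summing and using $U-V-(X,Y)$ collapses the total description rate to $\bar R_0+\bar R_1= I(X;V|Y)+\delta$, which yields \eqref{eq:rate}. Setting $\hat Y_{p,i}=g(v_i,y_i)$ and invoking the typical-average lemma (together with the vanishing error probability and boundedness of $d$) gives $E[d^{(n)}(Y_p^n,\hat Y_p^n)]\le E[d(Y_p,g(V,Y))]+\delta$, i.e.\ \eqref{eq:distortion}.

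The main obstacle is the leakage analysis, i.e.\ bounding $\frac{1}{n} I(Y_s^n;W,Z^n)$ with $W=(b_0,b_1)$. The plan is to pass from the bin indices to the codeword indices $(M_0,M_1)$ via the identity $I(Y_s^n;W,Z^n)=I(Y_s^n;M_0,M_1,Z^n)-I(Y_s^n;M_0,M_1|W,Z^n)$, which holds because $W$ is a deterministic function of $(M_0,M_1)$. The first term single-letterizes to roughly $nI(Y_s;V,Z)$ using the i.i.d.\ source, joint typicality of the selected codewords, and $U-V-(Y_s,Z)$ to drop $U$. The subtracted term must be \emph{lower} bounded: it quantifies the residual uncertainty of the codeword index within its bin as seen by an eavesdropper holding $Z^n$, and a conditional packing argument from the eavesdropper's viewpoint evaluates it in terms of $Z$'s decoding ability and the bin rates $I(U;X|Y)$, $I(V;X|U,Y)$. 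Balancing these contributions is what converts the $Z$-dependence of the first term into the $Y$-dependent bin-rate quantities and produces the combination $I(Y_s;V,Y)+I(Z;X,Y_s|U)-I(Y;X,Y_s|U)-I(X;Z|V,Y_s,Y)+I(X;Y|Y_s,Z)$; I expect this bookkeeping, and in particular correctly accounting for the $Z^n$-aided resolution of the bins, to be the most delicate part and the place where the $U$-layer earns its keep. (A soft-covering / likelihood-encoder argument is an alternative route that can streamline this step.)

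Finally, I would close the region under the stated cardinality bounds: applying the Fenchel--Eggleston--Carath\'eodory support lemma to the single-letter characterization, preserving $P_X$ together with the rate, distortion, and leakage functionals, yields $|\mathcal U|\le|\mathcal X|+3$ and then $|\mathcal V|\le(|\mathcal X|+3)(|\mathcal X|+2)$, and a standard closure argument over $\delta>0$ completes the proof.
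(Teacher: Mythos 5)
Your codebook construction, encoding/decoding, rate analysis, distortion analysis, and cardinality step coincide with the paper's proof in Appendix~A: the same two-layer superposition codebook with layer rates $I(X;U)$ and $I(X;V|U)$, Wyner--Ziv binning at rates $I(X;U|Y)$ and $I(X;V|U,Y)$, covering/packing arguments, and the support lemma. The genuine gap is in the leakage analysis, which is the crux of the theorem (the paper itself states that its scheme is that of Villard--Piantanida and that the main difference lies precisely in the leakage analysis). Your identity $I(Y_s^n;W,Z^n)=I(Y_s^n;M_0,M_1,Z^n)-I(Y_s^n;M_0,M_1|W,Z^n)$ is valid, but you never show how to lower-bound the subtracted term, and the mechanism you point to --- a conditional packing argument ``from the eavesdropper's viewpoint'' in terms of $Z$'s decoding ability --- is not the one that works. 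Writing $I(Y_s^n;M_0,M_1|W,Z^n)=H(M_0,M_1|W,Z^n)-H(M_0,M_1|W,Y_s^n,Z^n)$, the required lower bound needs an \emph{upper} bound on $H(M_0,M_1|W,Y_s^n,Z^n)$, and the only available handle on that term is the \emph{legitimate decoder}: since the in-bin indices $(W',W'')$ are decodable from $(W,Y^n)$, Fano's inequality gives $H(W',W''|W,Y_s^n,Z^n)\le I(W',W'';Y^n|W,Y_s^n,Z^n)+n\epsilon_n\le H(Y^n|Y_s^n,Z^n)-H(Y^n|J,K,Y_s^n,Z^n)+n\epsilon_n$. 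This Fano step, entirely absent from your sketch, is exactly what introduces the $Y$-dependence and in particular the summand $+I(X;Y|Y_s,Z)$ in \eqref{eq:leakage}; no argument about what the eavesdropper can resolve will produce it.

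Beyond that, any completion (yours or the paper's) needs single-letter bounds on $n$-letter conditional entropies given codeword \emph{indices} rather than codewords, e.g.\ $H(X^n|J,K,Y_s^n,Z^n)\le n(H(X|U,V,Y_s,Z)+\delta_{\epsilon})$ and $H(Y^n|J,K,Y_s^n,Z^n)\ge n(H(Y|U,V,Y_s,Z)-\delta_{\epsilon})$; these are the paper's Lemmas~\ref{lemma:lower_bound} and \ref{lemma:entropy_bound}, proved by typical-set counting, and they cannot be waved through because conditioning on an index is not equivalent to conditioning on the corresponding codeword (the index selection is correlated with the entire source realization). Note also that your particular decomposition is strictly more demanding than the paper's: the paper adds and subtracts $X^n$ inside the equivocation and only ever needs \emph{upper} bounds on entropy terms plus the Fano step, whereas your route additionally requires a \emph{lower} bound on the eavesdropper's in-bin index equivocation $H(M_0,M_1|W,Z^n)$ --- a nontrivial secrecy-from-binning lemma that the paper's arrangement avoids altogether. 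As it stands, you assert that the five-term expression will emerge from ``bookkeeping'' while supplying neither the Fano step nor the entropy lemmas that generate it, so the heart of the theorem remains unproven.
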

\begin{proof}
	The proof is given in Appendix \ref{app:proof_inner} and is based on a  random coding argument where the achievable scheme follows the layered (superposition) coding with binning in \cite{vpSMSC13}. The main difference lies in the analysis of  achievable leakage rate.  
\end{proof}
\begin{remark}
	The constraint in \eqref{eq:leakage} can be rewritten as 
	\begin{align*}
	L &\geq I(X,Y_s;U,V,Y) + I(Z;X,Y_s|U)-I(Y;X,Y_s|U) \\
	&\qquad -I(X;Z|Y_s)-I(X;V|Y,Y_s,Z),
	\end{align*}
		in which the terms on the right-hand side may be interpreted as follows. The term $I(X,Y_s;U,V,Y)$ corresponds to the leakage of $(X,Y_s)$ through the description which depends on the remaining uncertainty at the decoder (which in this case can decode $(U,V)$ and knows $Y$). The terms $I(Z;X,Y_s|U)-I(Y;X,Y_s|U) $ is the additional leakage of $(X,Y_s)$  through the difference of side information available at the eavesdropper and decoder given that the codeword $U$ can be decoded at the eavesdropper. Since we are only interested in the leakage of $Y_s$, the remaining terms correspond to the leakage reduction of $X$ that is ``orthogonal" to that of $Y_s$. 
		The layered coding here provides some degree of freedom to optimize  achievable leakage rate for our general setting. 
		\hfill $\lozenge$
\end{remark}

Next, we provide an outer bound to the rate-distortion-leakage region. 
\begin{theorem}[Outer bound]\label{theorem:outer}
	An outer bound to the rate-distortion-leakage region $\mathcal{R}_{out}$ is given as a set of all tuples $(R,D,L) \in \mathbb{R}_{+}^3$ satisfying \eqref{eq:rate}, \eqref{eq:distortion}, and
	\begin{align}
	L &\geq I(Y_s;V,Y) + I(Z;X,Y_s|U)-I(Y;X,Y_s|U) \nonumber\\
	&\qquad -I(X;Z|V,Y_s,Y)+I(X;Y|T,Y_s,Z),
	\end{align}
	for some $P_{X,Y_p,Y_s,Y,Z}$$P_{T,V|X}P_{U|V}$ and $g: \mathcal{V} \times \mathcal{Y} \rightarrow \hat{\mathcal{Y}}_p$. 
\end{theorem}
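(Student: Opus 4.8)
The plan is to derive the outer bound by single-letterizing an arbitrary code that achieves a tuple $(R,D,L)$; the rate and distortion constraints are handled by the classical Wyner--Ziv converse, and essentially all of the effort goes into the leakage term, whose single-letter form is the new ingredient.

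I would fix a sequence of codes satisfying \eqref{eq:rate_constraint}--\eqref{eq:leakage_constraint} and write $W=f^{(n)}(X^n)$. For the rate, starting from $n(R+\delta)\ge H(W)\ge H(W|Y^n)=I(X^n;W|Y^n)$, I would expand by the chain rule and use that the source is i.i.d.\ to obtain $I(X^n;W|Y^n)=\sum_{i=1}^{n} I(X_i;V_i|Y_i)$ with the identification $V_i:=(W,X^{i-1},Y^{n\setminus i})$. Since $V_i$ is a function of $W$ and of symbols at indices other than $i$, the i.i.d.\ property yields the Markov chain $V_i-X_i-(Y_{p,i},Y_{s,i},Y_i,Z_i)$ needed later. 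For the distortion I would note that $\hat Y_{p,i}$ is a function of $(W,Y^n)$ and hence of $(V_i,Y_i)$, so a per-letter reconstruction $g(V_i,Y_i)$ attains the same average distortion, giving \eqref{eq:distortion}. A time-sharing variable $Q\sim\mathrm{Unif}[1:n]$ then puts \eqref{eq:rate} and \eqref{eq:distortion} in single-letter form.

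The heart of the proof is the leakage. From $n(L+\delta)\ge I(Y_s^n;W,Z^n)$ I would first split off the memoryless part $I(Y_s^n;Z^n)=\sum_i I(Y_{s,i};Z_i)$ and then introduce the full source by rewriting the remainder in terms of $(X^n,Y_s^n)$, so as to align with the joint quantities $I(Z;X,Y_s|U)$ and $I(Y;X,Y_s|U)$ in the statement. The difference $I(Z;X,Y_s|U)-I(Y;X,Y_s|U)$ is the signature of the Csiszár sum identity: I would apply it to the cross terms comparing $Z^n$ with $Y^n$ so that the sum collapses to this single-letter difference, with $U_i$ a coarser auxiliary for which $U_i-V_i-X_i$ continues to hold. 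The terms $-I(X;Z|V,Y_s,Y)$ and the residual $X$--$Y$ term are then extracted by judicious addition and subtraction of conditional mutual informations, following the ``leakage reduction of $X$ orthogonal to $Y_s$'' interpretation in the Remark.

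Two genuine difficulties stand out. First, the Markov constraint $U-V-X$ forbids future eavesdropper symbols $Z_{i+1}^n$ from entering the fine auxiliary $V_i$ (they would destroy the telescoping that gives the rate bound), so the Csiszár manipulation must be arranged so that the emergent auxiliaries still satisfy $U_i-V_i-X_i$ and $(T_i,V_i)-X_i-(Y_{p,i},Y_{s,i},Y_i,Z_i)$; checking these Markov relations against the i.i.d.\ structure is the delicate bookkeeping. Second --- and this is exactly why the outer bound does not match Theorem~\ref{theorem:inner} --- the quantity that appears in achievability as $I(X;Y|Y_s,Z)$ only single-letterizes in the converse to the conditional form $I(X_i;Y_i|T_i,Y_{s,i},Z_i)$ with a nontrivial auxiliary $T_i$ (essentially the description together with past symbols); removing $T_i$ would require a Markov relation that need not hold, so the extra auxiliary $T$ survives in the bound. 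Collecting the per-letter inequalities, introducing $Q$, and setting $U=(U_Q,Q)$, $V=(V_Q,Q)$, $T=(T_Q,Q)$ completes the single-letterization.
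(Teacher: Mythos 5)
Your high-level strategy is the paper's strategy (single-letterization with auxiliary identifications, Csisz\'ar's sum identity for the $I(Z;X,Y_s|U)-I(Y;X,Y_s|U)$ difference, time-sharing), but there is a genuine gap at exactly the point you gloss over: the \emph{same} auxiliary $V$ must serve the rate bound \eqref{eq:rate}, the distortion bound \eqref{eq:distortion}, \emph{and} the leakage bound, and your choice $V_i=(W,X^{i-1},Y^{n\setminus i})$ cannot do all three. In the leakage derivation, the terms $I(Y_{s,i};V_i,Y_i)$ and $-I(X_i;Z_i|V_i,Y_{s,i},Y_i)$ arise from a chain-rule expansion of the form $H(Y_s^n,Z^n|W,Y^n)=\sum_i H(Y_{s,i},Z_i|W,Y^n,Y_s^{i-1},Z^{i-1})\leq \sum_i H(Y_{s,i},Z_i|W,Y^n,Z^{i-1})$, which forces the auxiliary to contain the eavesdropper's past, $V_i=(W,Z^{i-1},Y^{n\setminus i})$; conditioning on $X^{i-1}$ is not comparable to conditioning on $(Y_s^{i-1},Z^{i-1})$, so your $V_i$ does not survive this step. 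Making the $Z^{i-1}$-based $V_i$ compatible with the rate bound in turn requires a trick your Wyner--Ziv expansion misses: one must start from $H(W)\geq I(X^n,Z^n;W|Y^n)$ (deliberately including $Z^n$) so that the chain rule produces conditioning on $Z^{i-1}$ and the unwanted $X^{i-1}$ can be dropped by ``conditioning reduces entropy.'' Beyond this, the heart of the proof is an exact decomposition you never write down, namely $I(Y_s^n;W,Z^n)=H(Y_s^n)-H(Y_s^n,X^n|W,Z^n)+H(X^n|W,Y_s^n,Z^n)$, followed by re-expression relative to $Y^n$ via the Markov chains $Y_s^n-(X^n,Y^n)-W$ and $(Y^n,Z^n)-(X^n,Y_s^n)-W$; calling this ``judicious addition and subtraction'' is not a proof, since it is precisely these identities that create the sums $\sum_i[I(Y_i;W,Y_{i+1}^n)-I(Z_i;W,Z^{i-1})]$ to which Csisz\'ar's identity is applied (upgrading both terms to conditioning on $U_i=(W,Z^{i-1},Y_{i+1}^n)$).

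Your stated ``first genuine difficulty'' is also a misdiagnosis. Future eavesdropper symbols do \emph{not} threaten the Markov structure: by the i.i.d.\ property, any variable assembled from $W$ and off-index symbols satisfies the required chain given $X_i$, and indeed the paper's $T_i=(W,Y_{i+1}^n,Y_s^{n\setminus i},Z^{n\setminus i})$ contains \emph{all} off-index $Z$ symbols, future ones included, while still satisfying $(V_i,T_i)-X_i-(Y_i,Y_{p,i},Y_{s,i},Z_i)$; moreover $U_i$ is simply a function of $V_i$, so $U-V-X$ is automatic. The constraint that actually shapes the auxiliaries is the bookkeeping coordination described above, not Markovity. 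Relatedly, your guess that $T_i$ is ``essentially the description together with past symbols'' is wrong in detail: $T_i$ must contain $Y_s^{n\setminus i}$ and $Z^{n\setminus i}$ so that the residual term $I(X_i;Y_i|W,Y_{i+1}^n,Y_s^n,Z^n)$ can be rewritten as the single-letter quantity $I(X_i;Y_i|T_i,Y_{s,i},Z_i)$ appearing in the theorem.
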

\begin{proof}
	The proof is based on standard properties of the entropy function and the Csiszar's sum identity \cite{ekNIT11} and is given in Appendix \ref{app:proof_outer}. 
\end{proof}

\begin{remark}
	The results in Theorems \ref{theorem:inner} and \ref{theorem:outer} can be extended to a scenario where the sequences  $(X^n,Y_p^n,Y_s^n)$ are available directly at the encoder. In this case, we can replace $X$ by $(X,Y_p,Y_s)$ in Theorems \ref{theorem:inner} and \ref{theorem:outer}, and the joint distributions become those of the form $P_{X,Y_p,Y_s,Y,Z}P_{V|X,Y_p,Y_s}P_{U|V}$ and $P_{X,Y_p,Y_s,Y,Z}P_{T,V|X,Y_p,Y_s}P_{U|V}$, respectively.

			Theorems \ref{theorem:inner} and \ref{theorem:outer} can also be generalized to the case where $Y_p$ and $Y_s$ are not  ``disjoint," i.e., they share some common part. For instance, the decoder may wish to reconstruct some attributes associated with the data that are considered as secret to the eavesdropper. We simply modify the setup by replacing $Y_p$ by $(Y_p,Y_c)$ and $Y_s$ by $(Y_s,Y_c)$, where $Y_c$ acts as a common part that is supposed to be reconstructed at the decoder and protected against the eavesdropper.  Theorems~\ref{theorem:inner} and \ref{theorem:outer} continue to hold with $Y_p$ replaced by $(Y_p,Y_c)$ and $Y_s$ replaced by $(Y_s,Y_c)$ and the joint distribution of relevant source and side information is given by $P_{X,Y_p,Y_s,Y_c,Y,Z}$.\hfill $\lozenge$
		\end{remark}
				
We see that inner and outer bounds in Theorems \ref{theorem:inner} and \ref{theorem:outer} do not match in general. In particular, there is a gap between the leakage rate bounds. 
The difficulty of proving the tight bound lies in the complex dependency of information available at the eavesdropper and the secret remote source $Y_s^n$.
Nevertheless, there exist some special cases where the bounds are tight.

\begin{corollary}\label{corollary:region1}
For the sources and side information whose joint distributions satisfy $I(X;Y_s,Z|Y)=0$, the \emph{rate-distortion-leakage} region $\mathcal{R}$ is given as a set of all tuples $(R,D,L) \in \mathbb{R}_{+}^3$ satisfying \eqref{eq:rate}, \eqref{eq:distortion}, and $L \geq I(Y_s;Z)$ 
	for some $P_{X,Y_p,Y_s,Y,Z}$$P_{V|X}$ and $g: \mathcal{V} \times \mathcal{Y} \rightarrow \hat{\mathcal{Y}}_p$ with  $|\mathcal{V}|\leq |\mathcal{X}|+1$.
	
		Interestingly, in this case, the only leakage of $Y_s^n$  is from correlated side information $Z^n$. There is no additional leakage rate from the source description $W$ sent over the rate-limited link. 
			This can be explained as follows. Our achievability scheme is based on the binning technique which renders $I(W;Y^n) \leq n\delta_{\epsilon}$. Since we have the Markov chain $W-X^n-Y^n-(Y_s^n,Z^n)$, from the data processing inequality, the additional leakage rate of $Y_s^n$ due to $W$ becomes negligible, i.e., $I(Y_s^n;W|Z^n) \leq I(Y_s^n,Z^n;W)\leq I(Y^n;W)\leq n\delta_{\epsilon}$.
\end{corollary}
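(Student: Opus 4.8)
The plan is to prove the corollary by a matching converse and achievability, the whole point being that the hypothesis $I(X;Y_s,Z|Y)=0$ is exactly the Markov chain $X-Y-(Y_s,Z)$; combined with the test-channel structure $(U,V)-X-(Y_p,Y_s,Y,Z)$ this yields the longer chain $(U,V)-X-Y-(Y_s,Z)$, which I would use repeatedly. The bounds on $R$ and $D$ are already contained in the general results, so the real content is showing that both the inner and outer leakage bounds collapse to the single expression $L\geq I(Y_s;Z)$.

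For the converse I would handle the leakage directly and note that it requires no hypothesis at all: writing $\tfrac{1}{n}I(Y_s^n;W,Z^n)=\tfrac{1}{n}I(Y_s^n;Z^n)+\tfrac{1}{n}I(Y_s^n;W|Z^n)$ and dropping the nonnegative second term, the i.i.d.\ property gives $L+\delta\geq\tfrac{1}{n}I(Y_s^n;Z^n)=I(Y_s;Z)$, so $L\geq I(Y_s;Z)$ for every code. The accompanying bounds $R\geq I(X;V|Y)$ and $D\geq E[d(Y_p,g(V,Y))]$ are inherited from Theorem~\ref{theorem:outer} by marginalizing out the auxiliaries $T$ and $U$ and keeping only $V$ with marginal $P_{V|X}$, since neither the rate nor the distortion expression involves $T$ or $U$. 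Because the leakage bound is independent of the choice of $V$, this $V$ and $g$ simultaneously satisfy all three inequalities, establishing the outer bound.

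For achievability I would specialize Theorem~\ref{theorem:inner} by taking $U$ constant and show that the Markov hypothesis forces \eqref{eq:leakage} down to $I(Y_s;Z)$. First, $V-Y-Y_s$ gives $I(Y_s;V,Y)=I(Y_s;Y)$. Next, expanding $I(Z;X,Y_s)-I(Y;X,Y_s)$ by the chain rule yields $I(Y_s;Z)-I(Y_s;Y)+I(X;Z|Y_s)-I(X;Y|Y_s)$, and the identity $I(X;Z|Y_s)-I(X;Y|Y_s)=-I(X;Y|Y_s,Z)$ follows from $I(X;Z|Y,Y_s)=0$. After substitution the leakage reduces to $I(Y_s;Z)-I(X;Z|V,Y_s,Y)$. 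Finally I would show $I(X;Z|V,Y_s,Y)=0$: conditioned on $Y$ the pair $(X,V)$ is independent of $(Y_s,Z)$, so the conditional law factors as $P(X|V,Y)P(Z|Y_s,Y)$ given $(V,Y,Y_s)$, making $X$ and $Z$ conditionally independent. Hence the achievable leakage equals $I(Y_s;Z)$, matching the converse.

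The cardinality bound $|\mathcal{V}|\leq|\mathcal{X}|+1$ then follows from the standard support lemma: since the surviving leakage term $I(Y_s;Z)$ no longer depends on $V$, one needs only $|\mathcal{X}|-1$ constraints to preserve $P_X$, plus two more to preserve $H(X|V,Y)$ (hence the rate) and the distortion, for a total of $|\mathcal{X}|+1$. I expect the main obstacle to be the algebraic collapse of \eqref{eq:leakage}, and in particular verifying $I(X;Z|V,Y_s,Y)=0$, which is where the full strength of the chain $(U,V)-X-Y-(Y_s,Z)$ is used and which is precisely what makes the inner and outer leakage bounds coincide in this case even though they differ in general.
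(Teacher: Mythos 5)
Your proof is correct. The achievability half is essentially the paper's: the paper also specializes Theorem~\ref{theorem:inner} with $U=\emptyset$ and lets the chain $(V,X)-Y-(Y_s,Z)$ collapse \eqref{eq:leakage}; your explicit algebra (in particular $I(Y_s;V,Y)=I(Y_s;Y)$, the identity $I(X;Z|Y_s)-I(X;Y|Y_s)=-I(X;Y|Y_s,Z)$, and $I(X;Z|V,Y_s,Y)=0$) is exactly the verification the paper leaves implicit. Where you genuinely depart is the converse for the leakage: the paper's one-line proof says to specialize Theorem~\ref{theorem:outer} as well, whereas you prove $L\geq I(Y_s;Z)$ directly from the operational definition via $n(L+\delta)\geq I(Y_s^n;W,Z^n)\geq I(Y_s^n;Z^n)=nI(Y_s;Z)$, invoking Theorem~\ref{theorem:outer} only for the rate and distortion constraints. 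This difference is substantive, not cosmetic. Under the hypothesis, the leakage expression of Theorem~\ref{theorem:outer} reduces to $I(Y_s;Z)+I(U;Y|Z)+I(X;Y|T,Y_s,Z)-I(X;Y|Y_s,Z)$, and for arbitrary admissible $(T,U,V)$ this need not be at least $I(Y_s;Z)$: taking $T=X$ and $U=V=\emptyset$ it equals $I(Y_s;Z)-I(X;Y|Y_s,Z)$, strictly smaller whenever $I(X;Y|Y_s,Z)>0$ (e.g.\ $Y=X$ and $Z=Y_s$). Hence membership in $\mathcal{R}_{out}$ alone does not force $L\geq I(Y_s;Z)$; one must either exploit the specific identifications $U_i,V_i,T_i$ used inside the proof of Theorem~\ref{theorem:outer} or argue directly as you do, so your route closes a step that the paper's terse proof glosses over, at the price of not reusing the outer bound wholesale. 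Your support-lemma count for $|\mathcal{V}|\leq|\mathcal{X}|+1$ ($|\mathcal{X}|-1$ functionals to preserve $P_X$, one each for the rate and distortion, with the leakage term $I(Y_s;Z)$ free of $V$) is also the intended one.
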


\begin{proof}
	The proof follows from specializing Theorems \ref{theorem:inner} and \ref{theorem:outer} to the case where $X-Y-(Y_s,Z)$ forms a Markov chain, where in achievability, we  choose $U =\emptyset$. 
\end{proof}

\begin{corollary}\label{corollary:region}
	For the sources and side information whose joint distributions satisfy $I(X;Y|Y_s,Z)=0$, the \emph{rate-distortion-leakage} region $\mathcal{R}$ is given as a set of all tuples $(R,D,L) \in \mathbb{R}_{+}^3$ satisfying \eqref{eq:rate}, \eqref{eq:distortion}, and
	\begin{align}
		L &\geq I(Y_s;V,Y) + I(Z;X,Y_s|U)-I(Y;X,Y_s|U) \nonumber\\
		&\qquad -I(X;Z|V,Y_s,Y),
	\end{align}
		for some $P_{X,Y_p,Y_s,Y,Z}$$P_{V|X}P_{U|V}$ and $g: \mathcal{V} \times \mathcal{Y} \rightarrow \hat{\mathcal{Y}}_p$ with $|\mathcal{U}|\leq |\mathcal{X}|+3$ and $|\mathcal{V}|\leq (|\mathcal{X}|+3)(|\mathcal{X}|+2)$.
\end{corollary}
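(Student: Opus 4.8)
The plan is to show that under the hypothesis $I(X;Y|Y_s,Z)=0$ the inner and outer bounds of Theorems~\ref{theorem:inner} and~\ref{theorem:outer} collapse to the common expression stated in the corollary, thereby characterizing $\mathcal{R}$ exactly.

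First I would dispose of achievability. Comparing the leakage constraint \eqref{eq:leakage} of Theorem~\ref{theorem:inner} with the bound claimed here, the only difference is the additional term $+I(X;Y|Y_s,Z)$ appearing in \eqref{eq:leakage}. Since the hypothesis is precisely $I(X;Y|Y_s,Z)=0$, this term vanishes, and the inner bound specializes \emph{verbatim} to the region in the corollary (with auxiliaries distributed as $P_{V|X}P_{U|V}$). Hence every tuple in the claimed region is achievable, and the cardinality bounds on $\mathcal{U}$ and $\mathcal{V}$ are inherited directly from Theorem~\ref{theorem:inner}.

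For the converse I would invoke Theorem~\ref{theorem:outer}, whose leakage bound differs from the claimed one only through the final term $+I(X;Y|T,Y_s,Z)$; it therefore suffices to prove that this term is also zero. The key is the Markov structure induced by the outer-bound factorization $P_{X,Y_p,Y_s,Y,Z}P_{T,V|X}P_{U|V}$, under which $T$ is generated from $X$ alone, so that $T-X-(Y_p,Y_s,Y,Z)$ is a Markov chain. Using this chain I would first write $P(y\mid x,t,y_s,z)=P(y\mid x,y_s,z)$, and then apply the hypothesis in its Markov form $X-(Y_s,Z)-Y$ to obtain $P(y\mid x,y_s,z)=P(y\mid y_s,z)$. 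Combining the two gives $P(y\mid x,t,y_s,z)=P(y\mid y_s,z)$, which is independent of both $x$ and $t$ once $(y_s,z)$ is fixed; equivalently $Y$ is conditionally independent of $X$ given $(T,Y_s,Z)$, i.e. $I(X;Y|T,Y_s,Z)=0$. Consequently the outer-bound leakage expression reduces to the claimed one for \emph{every} admissible choice of $T$, so $T$ drops out and the outer region coincides with the inner region.

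The main obstacle is this converse step, although it is ultimately a short conditional-independence argument rather than a deep one. Its subtlety is that the hypothesis constrains only the \emph{source} variables $(X,Y,Y_s,Z)$, whereas the quantity that must vanish involves the auxiliary $T$, which is absent from the hypothesis. What makes the reduction go through is exactly that $T$ is a stochastic function of $X$ only and thus carries no information about $Y$ beyond that already carried by $X$; this is what lets me peel $T$ off the conditioning before applying the source-side Markov chain. With both bounds shown equal, the sandwich $\mathcal{R}_{in}\subseteq\mathcal{R}\subseteq\mathcal{R}_{out}$ yields the stated characterization.
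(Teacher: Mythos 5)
Your proposal is correct, and its overall skeleton (specialize the inner bound, specialize the outer bound, sandwich $\mathcal{R}_{in}\subseteq\mathcal{R}\subseteq\mathcal{R}_{out}$) is the same as the paper's; the achievability half is identical. Where you diverge is the converse: you prove that $I(X;Y|T,Y_s,Z)$ is \emph{exactly zero}, by combining the Markov chain $T-X-(Y,Y_s,Y_p,Z)$ (valid under the outer-bound factorization $P_{T,V|X}P_{U|V}$) with the hypothesis $X-(Y_s,Z)-Y$; this conditional-independence argument is sound, including the marginalization step that removes $x$ from the conditioning. The paper instead observes that one never needs equality: since $I(X;Y|T,Y_s,Z)\geq 0$, the outer-bound leakage constraint already \emph{implies} the claimed constraint, so the term can simply be dropped and the auxiliary $T$ forgotten. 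The paper's one-line argument buys two things: it is shorter, and it makes visible that the hypothesis $I(X;Y|Y_s,Z)=0$ plays no role whatsoever in the converse --- the weakened outer bound (without the $T$ term) holds for \emph{all} sources, and it is only the inner bound that needs the hypothesis to rise and meet it. Your argument buys something the corollary does not require, namely that under the hypothesis the outer region coincides exactly with (rather than merely contains $\mathcal{R}$ inside) the claimed region. Both proofs leave implicit the same routine point: the cardinality bounds quoted in the corollary come from the inner bound, and matching them against the unbounded outer-bound auxiliaries is the standard support-lemma argument.
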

\begin{proof}
	The achievability proof follows directly from Theorem \ref{theorem:inner} with $I(X;Y|Y_s,Z)=0$, while the converse follows from Theorem \ref{theorem:outer} and the fact that $I(X;Y|T,Y_s,Z) \geq 0$.
\end{proof}
\begin{remark}
	Corollaries \ref{corollary:region1} and \ref{corollary:region} hold also for the  case of stochastic encoder where the description $W$ is randomly generated according to a conditional PMF $p(w|x^n)$. This follows from the fact that in the proof of Theorem~\ref{theorem:outer}, we do not make any assumption regarding the deterministic encoder.
	\hfill $\lozenge$
\end{remark}
\begin{remark}
	We see that the rate-distortion-leakage region is known for several classes of sources and side information, e.g., those satisfying  $I(X;Y|Y_s,Z)=0$ in Corollary~\ref{corollary:region} which includes also the semi-deterministic mapping with $X$ being a deterministic function of $Y_s$. 
	Moreover, the result in Corollary~\ref{corollary:region}  recovers several existing results in the secure lossy source coding literature, e.g., 
	\begin{itemize}
		\item when $Y=Z$, we may think of the privacy leakage constraint as one imposed at the legitimate decoder.  If $X=(Y_p,Y_s)$, Corollary \ref{corollary:region} recovers the result of the utility-privacy tradeoff  with side information in \cite{srpUPTI13}. Furthermore, if $X=(Y_p,Y_s)$ and $Y=Z=\emptyset$, 
		Corollary~\ref{corollary:region} recovers the result in \cite{yASCP83}. 
		\item when $X=Y_p=Y_s$, Corollary \ref{corollary:region} recovers the result of secure lossy source coding problem studied in \cite{vpSMSC13}.
		\item when $Y=Y_s$, the leakage term becomes $I(Y_s^n;W,Z^n)$ which is of the same type as the side information privacy considered in \cite{tspDLSC13}. The main difference is that  in \cite{tspDLSC13}  side information privacy is considered at the secondary receiver who observes no additional side information. If the reconstruction constraint at the secondary receiver is neglected, then  zero leakage rate is achievable by the Wyner-Ziv coding \cite{wzTRDF76}. 
		Interestingly, in our case where the eavesdropper has access to the additional side information, the layered random binning scheme turns out to be optimal. An achievable leakage rate in this case is $I(Y_s;U,Z)+I(V;Z|Y_s,U)$ which is larger than $I(Y_s;Z)$. This is due to the fact that conditioned on $Z^n$, the Wyner-Ziv bin indices are still correlated with side information $Y_s^n$, and thus revealing some information about $Y_s^n$ to the eavesdropper.  \hfill $\lozenge$
	\end{itemize}
\end{remark}

\begin{corollary}\label{corollary:lossless}
	When we set $Y_p=X$ and consider a lossless reconstruction of $X^n$ at the decoder, an inner bound to the rate-leakage region is given by the set of all $(R,L)\in \mathbb{R}_{+}^2$ satisfying
	\begin{align*}
	R &\geq H(X|Y)\\
	L &\geq I(Y_s;X,Y) + I(Z;X,Y_s|U) - I(Y;X,Y_s|U)\\&\qquad +I(X;Y|Y_s,Z),
	\end{align*}
	for some $P_{X,Y_s,Y,Z}P_{U|X}$ with $|\mathcal{U}|\leq |\mathcal{X}|$.
	
The inner bound above can be proved similarly as in Theorem \ref{theorem:inner}. In fact, it can be obtained  from Theorem~\ref{theorem:inner} by setting $Y_p=X=V$.
\end{corollary}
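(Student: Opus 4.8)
The plan is to derive the bound as a specialization of the inner bound in Theorem~\ref{theorem:inner}, taking the auxiliary $V$ to be a copy of the source, i.e.\ choosing the identity test channel $V=X$ together with $Y_p=X$. First I would verify the rate constraint: substituting $V=X$ into \eqref{eq:rate} gives $I(X;V|Y)=I(X;X|Y)=H(X|Y)$, which is precisely the claimed rate. For the reconstruction requirement, with $V=X$ the decoder has access to $(V,Y)=(X,Y)$ and can take $g(V,Y)=X$, so that $\hat{Y}_p^n=X^n$ and $X^n$ is recovered losslessly; equivalently, one re-runs the layered achievability of Theorem~\ref{theorem:inner} with the Wyner--Ziv lossy layer for $X$ replaced by a Slepian--Wolf (lossless-with-side-information-$Y$) layer, which costs rate $H(X|Y)$ and leaves the leakage analysis structurally intact.

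Next I would specialize the leakage expression \eqref{eq:leakage}. Substituting $V=X$, the term $I(X;Z|V,Y_s,Y)=I(X;Z|X,Y_s,Y)$ vanishes since conditioning on $X$ renders it deterministic, while $I(Y_s;V,Y)$ becomes $I(Y_s;X,Y)$. The surviving terms are exactly $I(Y_s;X,Y)+I(Z;X,Y_s|U)-I(Y;X,Y_s|U)+I(X;Y|Y_s,Z)$, matching the stated bound, and the admissible distributions reduce from $P_{X,Y_p,Y_s,Y,Z}P_{V|X}P_{U|V}$ to $P_{X,Y_s,Y,Z}P_{U|X}$ (the channel $P_{V|X}$ collapses to the identity and $P_{U|V}=P_{U|X}$).

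Finally, the cardinality bound $|\mathcal{U}|\le|\mathcal{X}|$, which is tighter than the $|\mathcal{X}|+3$ of Theorem~\ref{theorem:inner}, requires a fresh support-lemma argument since the auxiliary $V$ no longer appears. Because $U-X-(Y_s,Y,Z)$ is a Markov chain, every term in the leakage bound except $H(Z|U)-H(Y|U)$ is a functional of the fixed joint law $P_{X,Y_s,Y,Z}$ alone; hence I only need $U$ to preserve the $|\mathcal{X}|-1$ free entries of $P_X$ together with this single additional functional, giving $|\mathcal{U}|\le(|\mathcal{X}|-1)+1=|\mathcal{X}|$ by the standard support lemma in \cite{ekNIT11}.

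I expect the main obstacle to be the subtle point that setting $V=X$ is a legitimate (limiting) choice of test channel in the achievability, together with the correct interpretation of ``lossless'': one must justify converting the expected-distortion guarantee of Theorem~\ref{theorem:inner} into a vanishing block-error guarantee, which is cleanest to handle by invoking Slepian--Wolf binning directly rather than passing to a distortion limit. The cardinality reduction, while routine, also needs the care of confirming that only the one $U$-dependent functional $H(Z|U)-H(Y|U)$ survives.
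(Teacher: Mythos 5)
Your proposal is correct and follows essentially the same route as the paper, which likewise obtains the corollary by specializing Theorem~\ref{theorem:inner} with $Y_p=X=V$ (noting $I(X;X|Y)=H(X|Y)$ and $I(X;Z|V,Y_s,Y)=0$); your additional care about the lossless interpretation via Slepian--Wolf binning and the fresh support-lemma count for $|\mathcal{U}|\leq|\mathcal{X}|$ simply fills in details the paper leaves implicit in its remark that the bound ``can be proved similarly as in Theorem~\ref{theorem:inner}.''
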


\begin{remark}
 We note that the special case of lossless reconstruction above was considered recently in \cite{aalLSSC15} where an inner bound to the rate-equivocation region is provided. In general, the results in  Corollary \ref{corollary:lossless} and \cite[Theorem 3]{aalLSSC15} do not match. As an example where $Y=Y_s$ and $Z=\emptyset$, it can be shown that Corollary \ref{corollary:lossless} implies that zero leakage rate is achievable (by choosing $U=\emptyset$). However, the achievable leakage rate according to \cite[Theorem 3]{aalLSSC15} can be strictly positive. \hfill $\lozenge$
\end{remark}

\subsection{Quadratic Gaussian Example}
We consider an example of the tradeoff in Corollary \ref{corollary:region} for Gaussian sources under quadratic distortion. 
Assuming that $X \sim \mathcal{N}(0,N_x)$, $Y_s= X + \tilde{N}_s$, where $\tilde{N}_s \sim \mathcal{N}(0,N_s) \perp X$, and $Y_p= Y_s + \tilde{N}_p$, where $\tilde{N}_p \sim \mathcal{N}(0,N_p-N_s) \perp Y_s$. Note that $X-Y_s-Y_p$ forms a Markov chain. Also, we assume that there is no side information, i.e., $Y=Z=\emptyset$.  The tradeoff region in Corollary \ref{corollary:region} reduces to the set of all $(R,D,L)$ satisfying
\begin{align}
R &\geq \frac{1}{2}\log\Big(\frac{N_x}{D-N_p}\Big) \label{eq:R_Guassian},\\
L &\geq \frac{1}{2}\log\Big(\frac{N_x+N_s}{D-N_p+N_s}\Big),\label{eq:L_Gaussian}
\end{align}
for $D > N_p$. 

While our main results were proven for discrete memoryless sources, the extension to the quadratic Gaussian case is standard
and it follows, e.g., \cite{wTRDF78} and 
\cite{ekNIT11}. For achievability, we set $U=\emptyset$, choose $V$ to be jointly Gaussian with $X$, i.e., $V=X+Q, Q \sim \mathcal{N}(0,N_q)$, and choose the reconstruction function $g(\cdot)$ to be an MMSE estimate of $Y_p$ given $V$. By letting $D=E[d(Y_p,g(V))]$ and substituting it into the constraints on $R$ and $L$, we obtain the result above. The converse follows from utilizing the EPI \cite{ekNIT11} together with the fact that $R$ and $L$ are decreasing in $h(X^n|W)$ and $h(Y_s^n|W)$, respectively. For the more detailed proof, please see Appendix~\ref{app:proof_Gaussian_example}. 

From \eqref{eq:R_Guassian} and \eqref{eq:L_Gaussian}, we can also write the minimum distortion as a function of $R$ and $L$, i.e., \[D_{\text{min}}(R,L)=\max\{N_p+N_x2^{-2R},N_p-N_s+(N_x+N_s)2^{-2L}\}.\]
For a fixed $R$, the minimum distortion  $D_{\text{min}}$ decreases with $L$, illustrating the utility-privacy tradeoff in terms of minimum achievable distortion and information leakage rate.
\section{Logarithmic loss distortion}
 Logarithmic loss  \cite{mfUP98,cwMSCU14} is a measure of quality of the ``soft" estimate used in several applications 
 \cite{bPRML06},\cite{aabgEBDM07}.
 Under logarithmic loss, we employ a soft estimate of the source in terms of a probability distribution over the source alphabet.
For a sequence $\hat{X}^n \in \hat{\mathcal{X}}^n$, we denote $\hat{X}_i$, $i=1,\ldots,n$, the $i^{th}$ element of $\hat{X}^n$. Then $\hat{X}_i, i=1,\ldots,n$ is a probability distribution on $\mathcal{X}$, i.e., $\hat{X}_i: \mathcal{X} \rightarrow [0,1]$, and $\hat{X}_i(x)$ is a probability distribution on $\mathcal{X}$ evaluated for the outcome $x \in \mathcal{X}$. 

\begin{definition}[Logarithmic loss \cite{cwMSCU14}]
	The logarithmic loss distortion is defined as $d(x,\hat{x}) = \log(\frac{1}{\hat{x}(x)})= D_{KL}(\mathbf{1}_{\{x\}}||\hat{x})$, where $\mathbf{1}_{\{x\}}: \mathcal{X} \rightarrow \{0,1\}$ is an indicator function such that, for $a \in \mathcal{X}$, $\mathbf{1}_{\{x\}}(a)=1$ if  $a=x$, and $\mathbf{1}_{\{x\}}(a)=0$ otherwise. 
Using this definition for symbol-wise distortion, it is standard to define the distortion between sequences as $d^{(n)}(x^n,\hat{x}^n)=\frac{1}{n}\sum_{i=1}^n d(x_i,\hat{x}_i)$.
\end{definition}

Under logarithmic loss, the average distortion and conditional entropy (equivocation at decoder) are closely related (see, e.g., Lemma 
\ref{lemma:loglossouterbound} below \cite{cwMSCU14}). 
This is reminiscent of what is known, e.g., for the Gaussian settings under quadratic distortion. 

\begin{lemma}\label{lemma:loglossouterbound}
	Let $C=(W,Y^n)$ be the argument of the reconstruction function $g^{(n)}(\cdot)$, i.e., $\hat{X}^n = g^{(n)}(C)$, then under the logarithmic loss distortion measure, we get $E[d^{(n)}(X^n,g^{(n)}(C))] \geq \frac{1}{n}\sum_{i=1}^nH(X_i|C)$.
\end{lemma}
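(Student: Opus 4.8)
The plan is to reduce the sequence-level bound to a symbol-wise statement and then recognize each per-letter term as a cross-entropy that is minimized by the true posterior. First I would use linearity of expectation to write $E[d^{(n)}(X^n,g^{(n)}(C))] = \frac{1}{n}\sum_{i=1}^n E[\log \frac{1}{\hat{X}_i(X_i)}]$, where $\hat{X}_i$ denotes the $i$-th component of $g^{(n)}(C)$ and is therefore a deterministic function of $C$ alone. The reason for collecting $(W,Y^n)$ into the single argument $C$ is precisely that $\hat{X}_i$ then carries no randomness beyond $C$, which is what makes the conditioning step below clean.

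Next I would condition on $C$. For a fixed realization $C=c$, the estimate $\hat{X}_i = \hat{x}_i(c)$ is a fixed probability vector on $\mathcal{X}$, so $E[\log\frac{1}{\hat{X}_i(X_i)}\mid C=c] = \sum_{x\in\mathcal{X}} P(X_i=x\mid C=c)\log\frac{1}{\hat{x}_i(c)(x)}$, which is exactly the cross-entropy between the true conditional law $P_{X_i\mid C=c}$ and the soft estimate $\hat{x}_i(c)$.

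The crucial inequality is then the non-negativity of the Kullback–Leibler divergence (Gibbs' inequality): for any two distributions $p,q$ on the same alphabet, $\sum_x p(x)\log\frac{1}{q(x)} \geq \sum_x p(x)\log\frac{1}{p(x)} = H(p)$, with equality iff $q=p$. Applying this with $p=P_{X_i\mid C=c}$ and $q=\hat{x}_i(c)$ gives $E[\log\frac{1}{\hat{X}_i(X_i)}\mid C=c] \geq H(X_i\mid C=c)$; averaging over $C$ yields $E[\log\frac{1}{\hat{X}_i(X_i)}] \geq H(X_i\mid C)$, and summing over $i$ and normalizing by $n$ completes the argument.

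There is essentially no hard analytic obstacle here; the only subtlety worth flagging is that the bound must hold uniformly over all decoders $g^{(n)}$, including suboptimal ones, and this is exactly what Gibbs' inequality guarantees, since the cross-entropy of any estimate is bounded below by the conditional entropy irrespective of how $g^{(n)}$ is chosen. The value of the lemma is that it converts a distortion constraint under logarithmic loss into an equivocation (conditional entropy) constraint, which is what allows the standard converse machinery for the rate-distortion region to carry over to the log-loss setting.
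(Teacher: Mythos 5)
Your proposal is correct and follows essentially the same route as the paper's proof: condition on $C=c$, recognize each per-letter expected loss as the cross-entropy between the true posterior $P_{X_i\mid C=c}$ and the soft estimate, lower-bound it by $H(X_i\mid C=c)$ via non-negativity of the KL divergence (the paper writes out this decomposition explicitly rather than citing Gibbs' inequality, which is the same fact), and average over $c$. The only cosmetic difference is the order of steps — you apply linearity over $i$ before conditioning on $c$, while the paper conditions first — which does not change the argument.
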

The proof of the lemma  follows from definition of logarithmic loss, i.e., $d(x_i,g^{(n)}_i(c)) \triangleq \log(\frac{1}{q(x_i|c)})$ where $q$ is a probability measure on $\mathcal{X}$. Then  the expected distortion conditioned on $C=c$,
\begin{align*}
&E[d^{(n)}(X^n,g^{(n)}(c))|C=c]\\
& = E[\frac{1}{n}\sum_{i=1}^n d(X_i,g^{(n)}_i(c))|C=c]\\
&= \sum_{x^n \in \mathcal{X}^n} p(x^n|c) \frac{1}{n}\sum_{i=1}^n d(x_i,g^{(n)}_i(c))\\
&= \frac{1}{n}\sum_{i=1}^n \sum_{x_i \in \mathcal{X}} p(x_i|c) \log\bigg(\frac{1}{q(x_i|c)}\bigg)\\
&= \frac{1}{n}\sum_{i=1}^n \sum_{x_i \in \mathcal{X}} p(x_i|c) \log\bigg(\frac{p(x_i|c)}{q(x_i|c)}\cdot \frac{1}{p(x_i|c)}\bigg)\\
&= \frac{1}{n} \sum_{i=1}^n D_{KL}(p(x_i|c)||q(x_i|c)) + \frac{1}{n} \sum_{i=1}^nH(X_i|C=c)\\
& \geq  \frac{1}{n} \sum_{i=1}^n H(X_i|C=c).
\end{align*}
By averaging both sides over all $c \in \mathcal{C}$, from the law of total expectation, the lemma is proved.

We note that Lemma~\ref{lemma:loglossouterbound} holds only for the case of symbol-by-symbol logloss distortion, i.e., $d^{(n)}(x^n,\hat{x}^n)=\frac{1}{n}\sum_{i=1}^n d(x_i,\hat{x}_i)$, which is considered in the problem formulation. 
\begin{corollary}\label{corollary:region_logloss}
	Consider the privacy-constrained remote source coding in Fig. \ref{fig:model}. When the decoder and eavesdropper observe identical side information, i.e., $Y= Z$ (can be seen as privacy leakage against the legitimate receiver), the \emph{rate-distortion-leakage} region under logarithmic loss is given as a set of all tuples $(R,D,L) \in \mathbb{R}_{+}^3$ satisfying 
	\begin{align}
	R &\geq I(X;V|Y) \\
	D&\geq H(Y_p| V,Y)\\
	L &\geq I(Y_s;V,Y),
	\end{align}
	for some $P_{X,Y,Y_p,Y_s}$$P_{V|X}$, with $|\mathcal{V}|\leq |\mathcal{X}|+2$.
\end{corollary}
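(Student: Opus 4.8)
The plan is to specialize the general inner and outer bounds of Theorems~\ref{theorem:inner} and~\ref{theorem:outer} to the case $Y=Z$ under logarithmic loss, and to show that the two bounds collapse to the stated region. First I would observe that setting $Y=Z$ dramatically simplifies the leakage expressions. In the inner bound \eqref{eq:leakage}, the difference $I(Z;X,Y_s|U)-I(Y;X,Y_s|U)$ vanishes identically, and likewise $-I(X;Z|V,Y_s,Y)+I(X;Y|Y_s,Z)$ collapses because $I(X;Y\mid Y_s,Z)=I(X;Y\mid Y_s,Y)=0$ and $I(X;Z\mid V,Y_s,Y)=I(X;Y\mid V,Y_s,Y)=0$ once $Z=Y$. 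Hence the entire leakage lower bound reduces to $L\geq I(Y_s;V,Y)$, and since the auxiliary $U$ no longer appears, one can take $U=\emptyset$. The same cancellations apply term-by-term in the outer bound of Theorem~\ref{theorem:outer}, where the trailing term $I(X;Y\mid T,Y_s,Z)=I(X;Y\mid T,Y_s,Y)=0$, so the outer leakage bound also reduces to $L\geq I(Y_s;V,Y)$ with $T$ irrelevant. This matching of the two leakage constraints is the crux of why the region is exactly characterized in this special case.

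Next I would handle the distortion constraint, which is where the logarithmic loss enters. For achievability, I would invoke Theorem~\ref{theorem:inner} with $U=\emptyset$ and, given the auxiliary $V$ produced by the scheme, choose the reconstruction function $g(V,Y)$ to output the posterior distribution $g(v,y)(\cdot)=P_{Y_p\mid V,Y}(\cdot\mid v,y)$. Under logarithmic loss this is the optimal soft estimate, and a direct computation (the single-letter analogue of the calculation in Lemma~\ref{lemma:loglossouterbound}) gives $E[d(Y_p,g(V,Y))]=H(Y_p\mid V,Y)$, so the distortion constraint \eqref{eq:distortion} becomes $D\geq H(Y_p\mid V,Y)$. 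For the converse direction, I would apply Lemma~\ref{lemma:loglossouterbound} with $C=(W,Y^n)$ reconstructing $Y_p^n$ (replacing $X$ by $Y_p$ throughout the lemma, which is valid since the lemma only uses symbol-wise logloss), yielding $D+\delta\geq E[d^{(n)}(Y_p^n,g^{(n)}(C))]\geq \frac{1}{n}\sum_i H(Y_{p,i}\mid W,Y^n)$. Single-letterizing this with the standard auxiliary identification $V_i=(W,Y^{i-1},Y_{i+1}^n)$ (the same $V$ that appears in the rate and leakage converse arguments of Theorem~\ref{theorem:outer}) produces $D\geq H(Y_p\mid V,Y)$, matching the achievable distortion.

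The rate constraint \eqref{eq:rate} is inherited unchanged from both theorems, so the only remaining task is bookkeeping: verifying that a single auxiliary $V$ with $V-X-(Y_p,Y_s,Y)$ simultaneously serves the rate, distortion, and leakage bounds, and establishing the cardinality bound $|\mathcal{V}|\leq|\mathcal{X}|+2$. The cardinality bound follows from the standard Fenchel--Eggleston--Carath\'eodory argument: $V$ must preserve the distribution $P_X$ (requiring $|\mathcal{X}|-1$ constraints) plus the three functionals $H(X\mid V,Y)$, $H(Y_p\mid V,Y)$, and $H(Y_s\mid V,Y)$ that determine the rate, distortion, and leakage quantities, giving $|\mathcal{X}|+2$. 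I expect the main obstacle to be confirming that the three cardinality-preserving functionals can be chosen so that a single $V$ achieves all three bounds simultaneously without needing the extra auxiliary $U$ that inflates the cardinality in the general Theorem~\ref{theorem:inner}; once $U=\emptyset$ is justified by the leakage cancellation above, this reduces to the clean three-functional count rather than the product bound $(|\mathcal{X}|+3)(|\mathcal{X}|+2)$ of the general inner bound.
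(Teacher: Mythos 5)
Your proposal is correct and follows essentially the same route as the paper: the paper's proof goes through Corollary~\ref{corollary:region} (which is itself exactly the specialization of Theorems~\ref{theorem:inner} and~\ref{theorem:outer} you perform, with the same leakage-term cancellations at $Y=Z$), uses the same posterior reconstruction $g(v,y)=p(y_p|v,y)$ to get $E[d(Y_p,g(V,Y))]=H(Y_p|V,Y)$, and proves the converse with the same identification $V_i=(W,Y^{n\setminus i})$ together with Lemma~\ref{lemma:loglossouterbound}. The only detail you omit is that the achievability proof of Theorem~\ref{theorem:inner} is stated for bounded distortion measures, so invoking it under (unbounded) logarithmic loss requires a small perturbation of the reconstruction distribution, a point the paper handles in a footnote.
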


\begin{proof}
For achievability, we apply Corollary \ref{corollary:region} under logarithmic loss distortion with $Y=Z$. We choose $g(\cdot)$ to be a conditional probability distribution on $\mathcal{Y}_p$, i.e., $g(v,y)=p(y_p|v,y)$ which gives $E[d(Y_p,g(V,Y))]=H(Y_p|V,Y)$.\footnote{The proof of  Theorem~\ref{theorem:inner}  holds for bounded distortion measures. However, 
	it can still be extended to logarithmic loss distortion by perturbing the reconstruction probability distribution (see e.g., \cite[Remark 3.4]{ckOSSC13}).}	
The converse follows by setting $V_i = (W,Y^{n \setminus i})$ and applying  Lemma~\ref{lemma:loglossouterbound} which gives $E[d^{(n)}(Y_{p}^n,g^{(n)}(W,Y^n))]=\frac{1}{n}\sum_{i=1}^n E[d(Y_{p,i},g_i^{(n)}(W,Y^n))]\geq \frac{1}{n}\sum_{i=1}^nH(Y_{p,i}|W,Y^n)$.
\end{proof}

\subsection{Secure Information Bottleneck}
Based on Corollary \ref{corollary:region_logloss}, we can formulate an optimization problem where for a given $P_{X,Y,Y_p,Y_s}$, we wish to minimize the rate $I(X;V|Y)$ over all $P_{V|X}$ subject to the constraints on the distortion and information leakage, i.e., 
\begin{align}
\min_{P_{V|X}}  I(X;V|Y)&\\
\text{s.t.} \ I(Y_p;V,Y) &\geq D'\label{eq:info_bottleneck_information}\\
I(Y_s;V,Y)&\leq L,\label{eq:info_bottleneck_leakage}
\end{align}
where in \eqref{eq:info_bottleneck_information}, the distortion constraint is rewritten as the ``information" constraint with $D' \triangleq H(Y_p)-D$. 
The minimum above corresponds to the minimum achievable rate for fixed $D'$ and $L$ and may be termed as the rate-information-leakage function $R_{\min}(D',L)$.  
We can see that the optimization problem is not convex, e.g., $I(Y_p;V,Y)$ is a convex function in $P_{V|X}$ for a fixed $P_{X,Y,Y_p,Y_s}$. Interestingly, the problem shares some similarity with the information bottleneck problem \cite{tpbTIBM00} where in our case there is an additional constraint on the privacy leakage rate \eqref{eq:info_bottleneck_leakage} and the presence of side information $Y$. Given that some side information $Y$ is known beforehand, the goal here is to represent the data $X$ efficiently by a compact representation $V$, while maximizing the relevance of $V$ on the public attribute $Y_p$ and minimizing the relevance of $V$ on the secret attribute $Y_s$. Due to the additional constraint on the privacy leakage rate, we term this optimization problem as  \emph{secure information bottleneck}. When $Y=\emptyset$, it also corresponds to a variant of information bottleneck  considered in \cite{ctERSW02}. The secure remote source coding problem
under logarithmic loss distortion therefore gives an operational meaning to the secure information bottleneck.

To solve the secure information bottleneck problem, we may extend the iterative algorithm proposed for the information bottleneck \cite{tpbTIBM00}. 
For example, the optimization problem above can be solved by minimizing the function
\begin{align*}
\mathcal{L} &= I(X;V|Y)-\beta[I(Y_p;V,Y)-\gamma I(Y_s;V,Y)]\\
&\qquad -\sum_{x,v}\lambda(x)p(v|x)
\end{align*}
over all $p(v|x)$, where $\lambda(x)$ are functions of $x$, and parameters $\beta$ and $\gamma$ are introduced to capture the tradeoff due to information (distortion) and privacy leakage constraints.
Similarly as in \cite{tpbTIBM00}, \cite{ctERSW02}, it can be shown that the stationary points of $\mathcal{L}$ are given by
 \begin{align}
 p(v|x) &= \frac{1}{Z(x,\beta,\gamma)}\exp\Big\{\sum_y p(y|x)\Big[\log p(v|y)\nonumber\\ &\qquad -\beta\big[D_{KL}(p(y_p|x,y)||p(y_p|v,y))\nonumber\\
 &\qquad -\gamma D_{KL}(p(y_s|x,y)||p(y_s|v,y)) \big]\Big]\Big\}, \label{eq:stationary_point}
 \end{align}
where $Z(x,\beta,\gamma)$ is the normalization term satisfying $\sum_v p(v|x)=1$. 

Equation \eqref{eq:stationary_point} together with the marginalization constraints for $p(v|y)$, $p(y_p|v,y)$, and $p(y_s|v,y)$ forms a set of equations which, given initial distributions, can be solved iteratively similarly as in \cite{tpbTIBM00},\cite{ctERSW02}. With $\beta$ and $\gamma$ that admit feasible solutions, the algorithm converges to a stationary point which may not be the global optimum. 

Alternatively, the problem can also be solved by a heuristic method, e.g., by extending the agglomerative information bottleneck \cite{stAIB99},\cite{msfmFIBT14} to include the privacy leakage constraint. For example, Algorithm $1$ in \cite{msfmFIBT14} can be modified to include a condition that requires the merging indices $i$ and $j$ to satisfy both information and privacy leakage constraints.  
\subsection{Example}
The secure information bottleneck may alternatively be formulated as 
\begin{align}
\max_{P_{V|X}}  I(Y_p;V,Y)& \label{eq:info_bottleneck_dis}\\
\text{s.t.} \ I(X;V|Y)&\leq R\\
I(Y_s;V,Y)&\leq L
\end{align}
for which the maximum in \eqref{eq:info_bottleneck_dis} corresponds to the maximum achievable information for given  $R$ and $L$ and may be termed as the  information-rate-leakage function $D'_{\max}(R,L)$. 

In the following, we consider two simple examples under the assumptions that $Y=\emptyset$  and $X-Y_s-Y_p$ forms a Markov chain, in which we can express $D'_{\max}(R,L)$ in closed form.

\emph{(i) Binary source}:
  Let $X \sim \text{Bernoulli}(1/2)$, $Y_s$ be an output of a BSC($p$), $p \in [0,1/2]$, with input $X$, and $Y_p$ be an output of a BSC($q$), $q \in [0,1/2]$, with input $Y_s$. The maximum achievable information for  given  $R \in [0, H(X)]$ and  $L \in [0, H(Y_s)]$ is given by
\begin{align}
D'_{\max}(R,L) &= H(Y_p)-\max\Big\{h(h^{-1}(H(Y_s)-L)*q), \nonumber \\ &\qquad \qquad h(h^{-1}(H(X)-R)*p*q)\Big\},
\end{align}
where $h(\cdot)$ is a binary entropy function with the inverse $h^{-1} : [0, 1] \rightarrow [0, 1/2]$, and $a*b \triangleq a(1-b)+(1-a)b$. 

The achievability proof follows by letting $V$ be an output of a BSC with input $X$. Then letting $D'=I(Y_p;V)=H(Y_p)-h(h^{-1}(H(X|V))*p*q) = H(Y_p)-h(h^{-1}(H(Y_s|V))*q)$. We obtain the result above by substituting $D'$ in the constraints on $R$ and $L$ and using the fact that $h(h^{-1}(u)*q)$ is an increasing function in $u$ for $q \in [0,1/2]$. The converse follows from Mrs. Gerber's lemma \cite{ekNIT11}. For the more detailed proof, please see Appendix \ref{app:proof_secureIB_binary_example}.

\emph{(ii) Gaussian source}: Let $(X,Y_s,Y_p)$ be jointly Gaussian, i.e., $X \sim \mathcal{N}(0,N_x)$, $Y_s= X + \tilde{N}_s$, where $\tilde{N}_s \sim \mathcal{N}(0,N_s) \perp X$, and $Y_p= Y_s + \tilde{N}_p$, where $\tilde{N}_p \sim \mathcal{N}(0,N_p-N_s) \perp Y_s$. The minimum achievable distortion for given  $R$ and  $L$ is given by
\begin{align}\label{eq:distortion-rate-leakage_function-Gaussian}
D'_{\max}(R,L) &= \min\Big\{\frac{1}{2}\log\Big(\frac{N_x+N_p}{N_x2^{-2R}+N_p}\Big), \nonumber \\ &\qquad \frac{1}{2}\log\Big(\frac{N_x+N_p}{(N_x+N_s)2^{-2L}+N_p-N_s}\Big) \Big\}.
\end{align}
The proof follows similarly as in the binary case where in achievability we let $V$ be jointly Gaussian with $X$, i.e., $V=X+Q$, $Q\sim \mathcal{N}(0,N_q)$, and in the converse we use the conditional EPI \cite{ekNIT11}. Note that if the leakage constraint is neglected, e.g., letting $L \rightarrow \infty$, \eqref{eq:distortion-rate-leakage_function-Gaussian} reduces to the optimal information-rate function in \cite{wmRIOG14}.


\appendices
\section{Proof of Theroem \ref{theorem:inner}}\label{app:proof_inner}
The proof is based on the achievable scheme used in \cite{vpSMSC13} which is a layered (superposition) coding scheme with binning. 

\begin{figure*}[!t]
	\normalsize
	\setcounter{MYtempeqncnt}{\value{equation}}
	\setcounter{equation}{5}
	\begin{align*}
	&	H(X^n|J,K,Y_s^n,Z^n) \\
	&\overset{(a)}{\leq} H(X^n,E|J,K,U^n(J),V^n(J,K),Y_s^n,Z^n)\\
	&\leq H(X^n|U^n,V^n,Y_s^n,Z^n,E) + H(E)\\
	&= \mathrm{Pr}(E=0) H(X^n|U^n,V^n,Y_s^n,Z^n,E=0) + \mathrm{Pr}(E=1) H(X^n|U^n,V^n,Y_s^n,Z^n,E=1) + H(E)\\
	&\overset{(b)}{\leq}H(X^n|U^n,V^n,Y_s^n,Z^n,E=0) +\delta_{\epsilon} H(X^n)+ h(\delta_{\epsilon}) \\
	&\leq H(X^n|U^n,V^n,Y_s^n,Z^n,E=0) + n\delta_{\epsilon} \log|\mathcal{X}| + h(\delta_{\epsilon})\\
	&= \sum_{(u^n,v^n,y_s^n,z^n) \in \mathcal{T}_{\epsilon}^{(n)}} p(u^n,v^n,y_s^n,z^n|E=0) H(X^n|U^n=u^n,V^n=v^n,Y_s^n=y_s^n,Z^n=z^n,E=0)  + n\delta_{\epsilon} \log|\mathcal{X}| + h(\delta_{\epsilon})\\
	&\overset{(c)}{\leq} \sum_{(u^n,v^n,y_s^n,z^n) \in \mathcal{T}_{\epsilon}^{(n)}} p(u^n,v^n,y_s^n,z^n|E=0) \log|\mathcal{T}_{\epsilon}^{(n)}(X|u^n,v^n,y_s^n,z^n)| + n\delta_{\epsilon} \log|\mathcal{X}|  + h(\delta_{\epsilon})\\&\overset{(d)}{\leq} n(H(X|U,V,Y_s,Z)+\delta_{\epsilon}')
	\end{align*}
	\setcounter{equation}{\value{MYtempeqncnt}}
	\hrulefill
	\vspace*{4pt}
\end{figure*}
For random codebook generation, we fix $P_{V|X}P_{U|V}$ and the reconstruction function $g(\cdot)$. 
\begin{itemize}
	\item Randomly and independently generate codewords $u^n(j)$ for $j \in [1:2^{n(I(X;U)+\delta_{\epsilon})}]$ according to the product distribution $\prod_{i=1}^n P_U(u_i)$. We distribute the codewords uniformly at random into  $2^{n(I(X;U|Y)+2\delta_{\epsilon})}$ equal-sized bins $b_U(w_1)$,  $w_1 \in [1:2^{n(I(X;U|Y)+2\delta_{\epsilon})}]$. Each bin contains $2^{n(I(U;Y)-\delta_{\epsilon})}$ codewords, each indexed by $w'$. There exists a one-to-one mapping between
index $j$ and the pair of bin/codeword indices $(w_1,w')$ such that, without loss of generality, we can
identify $j = (w_1,w')$.
	\item For each $j$,  randomly and conditionally independently generate codewords $v^n(j,k)$ where $k\in [1:2^{n(I(X;V|U)+\delta_{\epsilon})}]$ according to the conditional product distribution $\prod_{i=1}^n P_{V|U}(v_i|u_i(j))$, and distribute these codewords uniformly at random into $2^{n(I(X;V|U,Y)+2\delta_{\epsilon})}$ equal-sized bins $b_V(j,w_2)$, $w_2 \in [1:2^{n(I(X;V|U,Y)+2\delta_{\epsilon})}]$. Each bin $b_V(j,w_2)$ contains $2^{n(I(V;Y|U)-\delta_{\epsilon})}$ codewords, each indexed by $w''$. There exists a one-to-one mapping between	index $k$ and the pair of bin/codeword indices $(w_2,w'')$ such that, without loss of generality, we can
	identify $k = (w_2,w'')$.
	\item The codebooks are revealed to all parties.
\end{itemize}

For encoding, 
\begin{itemize}
	\item Given $X^n=x^n$, the encoder looks for codeword  $u^n(j)$ such that $(x^n,u^n(j))$ are jointly typical. From the covering lemma \cite{ekNIT11}, with high probability, there exist such a codeword since there are more than $2^{nI(X;U)}$ codewords $u^n(j)$. If there are more than one, the encoder selects one with the smallest index $j$.
	\item Then based on $x^n$ and the chosen $u^n(j)$, the encoder looks for a codeword $v^n(j,k)$ such that $(x^n,u^n(j)),v^n(j,k)$ are jointly typical. From the covering lemma, with high probability,  there exists such a codeword since there are more than $2^{nI(X;V|U)}$ codewords $v^n(j,k)$ for each $j$. If there are more than one, the encoder selects one with the smallest index $k$. 
	\item The encoder sends the bin indices $w_1$ and $w_2$ of the chosen codewords to the decoder. The total rate is thus equal to $I(X;U|Y)+I(X;V|U,Y)+4\delta_{\epsilon}=I(X;V|Y)+4\delta_{\epsilon}$, where the equality follows from  the Markov chain $U-V-X-Y$.
\end{itemize} 

For decoding,
\begin{itemize}
	\item Based on $Y^n=y^n$ and the bin indices  $(w_1,w_2)$, the decoder looks for a unique codeword $u^n(j)$ in bin $b_U(w_1)$ such that $(y^n,u^n(j))$ are jointly typical. From the packing lemma \cite{ekNIT11}, there exists such a codeword $u^n(j)$ with high probability since there are less than $2^{nI(U;Y)}$ codewords $u^n(j)$ in each bin $b_U(w_1)$.
	\item Then based on the decoded $u^n(j)$, the decoder looks for a unique codeword $v^n(j,k)$ in bin $b_V(j,w_2)$ such that $(y^n,u^n(j),v^n(j,k))$ are jointly typical. From the packing lemma, there exists such a codeword $v^n(j)$ with high probability since there are less than $2^{nI(V;Y|U)}$ codewords $v^n(j,k)$ in each bin $b_V(j,w_2)$.
	\item The decoder reconstructs $\hat{y}_p^n$ such that $\hat{y}_{p,i}=g(v_i(j,k),y_i)$ for $i=1,\ldots,n$.
\end{itemize}

Let $J=(W_1,W')$ and $K=(W_2,W'')$ be the indices associated with the chosen codewords $U^n(J)$ and $V^n(J,K)$. From LLN, we have that the sequences  $(X^n,Y^n,Y_p^n,Y_s^n,Z^n,U^n(J),V^n(J,K))$ are jointly typical with high probability. Thus, using similar arguments as in \cite{wzTRDF76} for a bounded distortion measure, the distortion constraint is satisfied if $D \geq E[d(Y_p,g(V,Y))]$.

Before proceeding with the analysis of the leakage rate, we give a lemma which provides a bound on the
$n$-letter conditional entropy based on properties of jointly typical sequences.
\begin{lemma} \label{lemma:lower_bound}
	Let the index $J$ and the pair of indices $(J,K)$ be the indices specifying codewords $U^n$ and $V^n$, respectively. If $\mathrm{Pr}((X^n,U^n(J),V^n(J,K),Y_s^n,Z^n) \in \mathcal{T}_{\epsilon}^{(n)}) \rightarrow 1$ as $n \rightarrow \infty$, we have that $H(X^n|J,K,Y_s^n,Z^n) \leq n(H(X|U,V,Y_s,Z)+\delta_{\epsilon})$.
\end{lemma}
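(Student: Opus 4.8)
The plan is to bound the conditional entropy $H(X^n \mid J, K, Y_s^n, Z^n)$ by conditioning on the actual codeword sequences $U^n(J)$ and $V^n(J,K)$ rather than their indices, and then exploiting the fact that with high probability the tuple $(X^n, U^n, V^n, Y_s^n, Z^n)$ lies in the jointly typical set. Since $(U^n, V^n)$ are deterministic functions of the index pair $(J,K)$ via the fixed codebook, conditioning on $(J,K)$ is at least as informative as conditioning on $(U^n, V^n)$, so I first write $H(X^n \mid J, K, Y_s^n, Z^n) \le H(X^n \mid U^n, V^n, Y_s^n, Z^n)$.

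The key device is to introduce an indicator random variable $E$ for the atypicality event: $E = 0$ when $(X^n, U^n, V^n, Y_s^n, Z^n) \in \mathcal{T}_\epsilon^{(n)}$ and $E = 1$ otherwise. First I would augment the conditioning set with $E$ at the cost of an extra $H(E) \le h(\delta_\epsilon)$ term (step $(a)$ and the line adding $H(E)$). Then I split the entropy according to the value of $E$ via the law of total probability. On the atypical event, I crudely bound $H(X^n \mid \cdots, E=1) \le H(X^n) \le n \log|\mathcal{X}|$, and since the hypothesis of the lemma guarantees $\mathrm{Pr}(E=1) \le \delta_\epsilon$, this contributes only $n\,\delta_\epsilon \log|\mathcal{X}|$ (step $(b)$ and the following line). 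On the typical event $E=0$, I expand the conditional entropy as a sum over typical tuples $(u^n, v^n, y_s^n, z^n) \in \mathcal{T}_\epsilon^{(n)}$ weighted by their conditional probabilities.

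The decisive estimate (step $(c)$) is the standard typicality bound: for any jointly typical $(u^n, v^n, y_s^n, z^n)$, the conditional entropy $H(X^n \mid U^n=u^n, V^n=v^n, Y_s^n=y_s^n, Z^n=z^n, E=0)$ is at most the log-cardinality of the conditional typical set $\mathcal{T}_\epsilon^{(n)}(X \mid u^n, v^n, y_s^n, z^n)$, because conditioned on typicality the random variable $X^n$ is supported on that conditional typical set. I then invoke the conditional typicality lemma to bound $\log|\mathcal{T}_\epsilon^{(n)}(X \mid u^n, v^n, y_s^n, z^n)| \le n(H(X \mid U, V, Y_s, Z) + \delta_\epsilon')$ (step $(d)$), uniformly over the typical tuples. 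Collecting the three pieces and absorbing $n\,\delta_\epsilon \log|\mathcal{X}| + h(\delta_\epsilon)$ together with $\delta_\epsilon'$ into a single vanishing slack $\delta_\epsilon$ yields the claimed bound.

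I expect the main obstacle to be the careful justification of step $(c)$ — specifically, arguing that conditioned on the event $E=0$ the distribution of $X^n$ is genuinely confined to the conditional typical set so that its entropy cannot exceed the log of that set's size. This requires being precise about how the indicator $E$ interacts with the codebook-induced joint distribution, since $U^n$ and $V^n$ are codeword sequences rather than i.i.d. draws; one must verify that the induced conditional law of $X^n$ given typical $(u^n, v^n, y_s^n, z^n)$ places all its mass on $X$-sequences that are jointly typical with the conditioning tuple. The remaining steps are routine entropy manipulations and direct applications of the typicality lemmas from \cite{ekNIT11}.
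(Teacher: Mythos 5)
Your proposal is correct and follows essentially the same route as the paper's own proof: the indicator variable $E$ for atypicality, the split into typical and atypical events with the crude $n\log|\mathcal{X}|$ bound on the latter, and the bound of the conditional entropy by the log-cardinality of the conditional typical set $\mathcal{T}_\epsilon^{(n)}(X|u^n,v^n,y_s^n,z^n)$ are exactly the steps the paper uses. The point you flag as the main obstacle (confinement of $X^n$ to the conditional typical set given $E=0$) is handled in the paper precisely as you describe, so there is no gap.
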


	To prove the lemma, let $E$ be a binary random variable taking value $0$ if $(X^n,U^n(J),V^n(J,K),Y_s^n,Z^n) \in \mathcal{T}_{\epsilon}^{(n)}$, and $1$ otherwise. Since $(X^n,U^n(J),V^n(J,K),Y_s^n,Z^n) \in \mathcal{T}_{\epsilon}^{(n)}$ with high probability, we have $\mathrm{Pr}(E=1) \leq \delta_{\epsilon}$. The proof steps are given on top of the previous page
	where step $(a)$ follows from the fact that given the codebook, $U^n$ and $V^n$ are functions of $J$ and $(J,K)$, $(b)$ follows from $\mathrm{Pr}(E=1) \leq \delta_{\epsilon}$ where $h(\cdot)$ is the binary entropy function, and $(c)$ and $(d)$ follow from the property of jointly typical set \cite{ekNIT11} with $\delta_{\epsilon}, \delta_{\epsilon}' \rightarrow 0$ as $\epsilon \rightarrow 0$, and $\epsilon \rightarrow 0$ as $n \rightarrow \infty$.

The privacy leakage average over all randomly chosen codebooks can be bounded as follows. 
\begin{align*} \label{eq:bound_step}
&I(Y_s^n;W_1,W_2,Z^n)= H(Y_s^n) -H(Y_s^n|W_1,W_2,Z^n) \\
&=H(Y_s^n)-H(Y_s^n,X^n|W_1,W_2,Z^n) \\ &\qquad +H(X^n|W_1,W_2,Y_s^n,Z^n) \\
&\leq H(Y_s^n)-H(Y_s^n,X^n|J,Z^n)+H(W_2) \\ &\qquad +H(X^n|J,K,Y_s^n,Z^n) \\ &\qquad +I(X^n;W',W''|W_1,W_2,Y_s^n,Z^n) \\
&\leq H(Y_s^n)-H(Y_s^n,X^n,Z^n)+H(J)+H(Z^n|J)+H(W_2) \\&\qquad +H(X^n|J,K,Y_s^n,Z^n) +H(W',W''|W_1,W_2,Y_s^n,Z^n) \\
&\overset{(a)}{\leq} n[-H(X,Z|Y_s)+I(X;U)+H(Z|U)+I(X;V|U,Y) \\& \qquad +H(X|U,V,Y_s,Z)+\delta_{\epsilon}']  \\&\qquad  +H(W',W''|W_1,W_2,Y_s^n,Z^n)\\
&\overset{(b)}{\leq}  n[P+\delta_{\epsilon}'] + I(W',W'';Y^n|W_1,W_2,Y_s^n,Z^n) + n\epsilon_n  \\
&\leq n[P+\delta_{\epsilon}'] + H(Y^n|Y_s^n,Z^n) - H(Y^n|J,K,Y_s^n,Z^n)+ n\epsilon_n  \\
&\overset{(c)}{\leq} n[P+H(Y|Y_s,Z)+ \delta_{\epsilon}'-H(Y|U,V,Y_s,Z)+\delta_{\epsilon}]+ n\epsilon_n  \\
&\overset{(d)}{=} n[I(Y_s;V,Y) + I(Z;X,Y_s|U)-I(Y;X,Y_s|U)  \\
&\qquad -I(X;Z|V,Y_s,Y)+I(X;Y|Y_s,Z) + \delta_{\epsilon}''] \\
&\leq n[L+ \delta_{\epsilon}'']  
\end{align*}
if $L \geq I(Y_s;V,Y) + I(Z;X,Y_s|U)-I(Y;X,Y_s|U) -I(X;Z|V,Y_s,Y)+I(X;Y|Y_s,Z)$, 
where $(a)$ follows from memoryless property of the sources, from the codebook generation with $J\in[1:2^{n(I(X;U)+\delta_{\epsilon})}]$, $W_2 \in [1:2^{n(I(X;V|U,Y)+2\delta_{\epsilon})}]$, and from the bounds $H(Z^n|J) \leq n(H(Z|U)+\delta_{\epsilon})$ and $H(X^n|J,K,Y_s^n,Z^n) \leq n(H(X|U,V,Y_s,Z)+\delta_{\epsilon})$ which can be shown similarly as in Lemma \ref{lemma:lower_bound}, $(b)$ follows from defining $P = -H(X,Z|Y_s)+I(X;U)+H(Z|U)+I(X;V|U,Y) +H(X|U,V,Y_s,Z)$ and Fano's inequality $H(W',W''|W_1,W_2,Y^n,Y_s^n,Z^n) \leq n\epsilon_n$ which holds since given the codebook and $W_1,W_2,Y^n$, the decoder can decode $(W',W'')$ with high probability, $(c)$ follows from Lemma~\ref{lemma:entropy_bound} below, and $(d)$ follows from the definition of $P$ and the Markov chain $U-V-X-(Y,Y_p,Y_s,Z)$.

\begin{lemma}\label{lemma:entropy_bound}
	Given the codebook where $J$ is the codeword index of $U^n$, and $(J,K)$ is the codeword index of $V^n$, if $\mathrm{Pr}((Y^n,U^n(J),V^n(J,K),Y_s^n,Z^n) \in \mathcal{T}_{\epsilon}^{(n)}) \rightarrow 1$ as $n \rightarrow \infty$, we have that  $H(Y^n|J,K,Y_s^n,Z^n) \geq n[H(Y|U,V,Y_s,Z)-\delta_{\epsilon}]$
\end{lemma}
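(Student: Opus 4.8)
The plan is to establish this bound as the natural \emph{lower-bound} counterpart of Lemma~\ref{lemma:lower_bound}: whereas there an upper bound on a conditional entropy followed from bounding the \emph{size} of a conditional typical set, here I must instead show that the conditional law of $Y^n$ does not \emph{concentrate}, so that its entropy is at least $n[H(Y|U,V,Y_s,Z)-\delta_\epsilon]$. First I would pass from the indices to the codewords. Since $U^n=U^n(J)$ and $V^n=V^n(J,K)$ are deterministic functions of $(J,K)$ given the codebook, we have $H(Y^n|J,K,Y_s^n,Z^n)=H(Y^n|J,K,U^n,V^n,Y_s^n,Z^n)$, and the gap to $H(Y^n|U^n,V^n,Y_s^n,Z^n)$ equals $I(Y^n;J,K|U^n,V^n,Y_s^n,Z^n)\le H(J,K|U^n,V^n)$, which is $o(n)$ because with the chosen rates the codewords are distinct with high probability, so $(J,K)$ is essentially recoverable from $(U^n,V^n)$. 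Hence it suffices to prove $H(Y^n|U^n,V^n,Y_s^n,Z^n)\ge n[H(Y|U,V,Y_s,Z)-\delta_\epsilon']$.

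For the main step I would invoke the fact that Shannon entropy dominates min-entropy. Introduce the indicator $E$ of the event $(Y^n,U^n,V^n,Y_s^n,Z^n)\in\mathcal{T}_\epsilon^{(n)}$, which by hypothesis satisfies $\Pr(E=0)\to1$. Since conditioning reduces entropy and the $E=1$ term is nonnegative,
\[
H(Y^n|U^n,V^n,Y_s^n,Z^n)\ge H(Y^n|U^n,V^n,Y_s^n,Z^n,E)\ge \Pr(E=0)\,H(Y^n|U^n,V^n,Y_s^n,Z^n,E=0).
\]
Under $E=0$ the conditioning tuple $(u^n,v^n,y_s^n,z^n)$ is itself typical and $Y^n$ is supported on $\mathcal{T}_\epsilon^{(n)}(Y|u^n,v^n,y_s^n,z^n)$, so it is enough to upper bound the per-sequence conditional probability there: I would show that for every jointly typical tuple, $p(y^n|u^n,v^n,y_s^n,z^n)\le 2^{-n(H(Y|U,V,Y_s,Z)-\delta_\epsilon)}$. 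Feeding this into the conditional min-entropy bound $H(Y^n|\cdots,E=0)\ge -\log\bigl(\max_{y^n}p(y^n|\cdots,E=0)\bigr)$ and using $\Pr(E=0)\to1$ then yields the claim uniformly over all conditionings with $E=0$.

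The crux, and the main obstacle, is precisely this per-sequence bound, because the relevant conditional law is \emph{not} the i.i.d. test channel but the codebook-induced mixture: conditioning on $\{U^n=u^n,V^n=v^n\}$ restricts $X^n$ to the set of source sequences that the encoder maps to those codewords, and $(Y^n,Y_s^n,Z^n)$ is then drawn through the memoryless channel $P_{Y,Y_s,Z|X}$. Writing $p(y^n|u^n,v^n,y_s^n,z^n)=p(y^n,y_s^n,z^n|u^n,v^n)/p(y_s^n,z^n|u^n,v^n)$, I would bound the numerator above and the denominator below by counting how many source sequences in the selected codeword cell are jointly typical with $(y^n,y_s^n,z^n)$ versus with $(y_s^n,z^n)$; using the Markov chain $(Y_s^n,Z^n)-X^n-(U^n,V^n)$ together with the conditional typicality (Markov) lemma \cite{ekNIT11}, both counts concentrate and the ratio comes out to $2^{-n(H(Y,Y_s,Z|U,V)-H(Y_s,Z|U,V)\pm\delta_\epsilon)}=2^{-n(H(Y|U,V,Y_s,Z)\pm\delta_\epsilon)}$, as required. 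The delicate point is the lower bound on the denominator, i.e.\ guaranteeing that the selected cell contains enough preimages jointly typical with $(y_s^n,z^n)$; this is exactly where the hypothesis $\Pr((Y^n,U^n,V^n,Y_s^n,Z^n)\in\mathcal{T}_\epsilon^{(n)})\to1$ is essential, and it is the step that requires the most care.
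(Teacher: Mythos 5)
Your strategy is genuinely different from the paper's, but it has a real gap at exactly the step you yourself flag as the crux. The pointwise claim you need --- that for \emph{every} jointly typical tuple $p(y^n|u^n,v^n,y_s^n,z^n)\le 2^{-n(H(Y|U,V,Y_s,Z)-\delta_\epsilon)}$ --- is not established by your counting sketch, and it is doubtful as stated. The conditional law given $\{U^n=u^n,V^n=v^n\}$ is the encoder-induced mixture over the preimage cell of that codeword pair, and your bound is a ratio whose denominator $p(y_s^n,z^n|u^n,v^n)$ must be lower-bounded by roughly $2^{-n(H(Y_s,Z|U,V)+\delta_\epsilon)}$ \emph{for each} typical $(y_s^n,z^n)$. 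Joint typicality of $(y_s^n,z^n)$ with $(u^n,v^n)$ does not guarantee that the cell places enough probability mass on source sequences jointly typical with that particular $(y_s^n,z^n)$: the cell is a complicated, codebook- and selection-rule-dependent set, and whether it ``covers'' a given typical $(y_s^n,z^n)$ is a concentration question about the random codebook and the encoder map (a second-moment, resolvability-type argument), not a consequence of the conditional typicality (Markov) lemma as you claim. The hypothesis $\mathrm{Pr}(\cdot\in\mathcal{T}_\epsilon^{(n)})\to 1$ yields only an average, high-probability statement, so you would have to recast the whole argument in those terms --- at which point the per-sequence bound is no easier than the lemma itself. A secondary flaw: your reduction from $(J,K)$ to $(U^n,V^n)$ rests on ``the codewords are distinct with high probability,'' which is false in general (e.g., if $H(U|X)\approx 0$, the codebook contains about $2^{n\delta_\epsilon}$ copies of the selected codeword); this particular step is repairable, since the duplication is only $2^{nO(\delta_\epsilon)}$-fold, but not for the reason you give.

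The paper sidesteps all of this with a purely information-theoretic identity that converts the desired \emph{lower} bound into quantities needing only \emph{upper} bounds:
\begin{align*}
&H(Y^n|J,K,Y_s^n,Z^n)\\
&\quad= H(Y^n,J,K,Y_s^n,Z^n)-H(J,K)-H(Y_s^n,Z^n|J,K)\\
&\quad\geq nH(Y,Y_s,Z)+I(X^n;J,K|Y^n,Y_s^n,Z^n)\\
&\qquad\quad-H(J,K)-H(Y_s^n,Z^n|J,K)\\
&\quad= nH(X,Y,Y_s,Z)-H(X^n|J,K,Y^n,Y_s^n,Z^n)\\
&\qquad\quad-H(J,K)-H(Y_s^n,Z^n|J,K),
\end{align*}
where the joint entropy is known exactly because the sources are i.i.d., $H(J,K)\leq n(I(X;U,V)+2\delta_\epsilon)$ follows from the codebook rates, and the two remaining conditional entropies are upper-bounded by $n(H(X|U,V,Y,Y_s,Z)+\delta_\epsilon)$ and $n(H(Y_s,Z|U,V)+\delta_\epsilon)$ via the same typical-set-size argument as Lemma~\ref{lemma:lower_bound}; the Markov chain $(U,V)-X-(Y,Y_s,Z)$ then collapses the expression to $n[H(Y|U,V,Y_s,Z)-\delta_\epsilon]$. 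If you want to salvage your write-up, the cleanest fix is to adopt this decomposition rather than pursue pointwise control of the codebook-induced conditional law.
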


\textit{Proof of Lemma \ref{lemma:entropy_bound}} We consider the following bound.
	\begin{align*}
	&H(Y^n|J,K,Y_s^n,Z^n) \\
	&= H(Y^n,J,K,Y_s^n,Z^n)-H(J,K)-H(Y_s^n,Z^n|J,K)\\
	&\geq nH(Y,Y_s,Z) + I(X^n;J,K|Y^n,Y_s^n,Z^n)-H(J,K)\\&\qquad -H(Y_s^n,Z^n|J,K)\\
	&= nH(X,Y,Y_s,Z) - H(X^n|J,K,Y^n,Y_s^n,Z^n)-H(J,K)\\&\qquad -H(Y_s^n,Z^n|J,K)\\
	&\overset{(a)}{\geq} n[H(X,Y,Y_s,Z)-H(X|U,V,Y,Y_s,Z)-I(X;U,V)\\ &\qquad -H(Y_s,Z|U,V)-\delta_{\epsilon}]\\
	&\overset{(b)}{=} n[H(Y|U,V,Y_s,Z)-\delta_{\epsilon}],
	\end{align*}
	where $(a)$ follows from the codebook generation where $J \in [1:2^{n(I(X;U)+ \delta_{\epsilon})}]$ and $K \in [1:2^{n(I(X;V|U)+ \delta_{\epsilon})}]$, and from bounding $H(X^n|J,K,Y^n,Y_s^n,Z^n)$ and $H(Y_s^n,Z^n|J,K)$ similarly as in Lemma \ref{lemma:lower_bound}, and $(b)$ follows from the Markov chain $(U,V)-X-(Y,Y_s,Z)$. 
	
The cardinality bounds can be proved using the support lemma \cite{ckITCT11}.

\section{Proof of Theorem \ref{theorem:outer}}\label{app:proof_outer}
Let $(R,D,L)$ be an achievable tuple. Define $U_i \triangleq (W,Z^{i-1},Y_{i+1}^n)$, $V_i \triangleq (W,Z^{i-1},Y^{n\setminus i})$ and $T_i=(W,Y_{i+1}^n,Y_s^{n\setminus i},Z^{n\setminus i})$ which satisfy $(V_i,T_i)-X_i-(Y_i,Y_{p,i},Y_{s,i},Z_i)$ and $U_i-V_i-(T_i,X_i,Y_i,Y_{p,i},Y_{s,i},Z_i)$ for all $i=1,\ldots,n$. From properties of the entropy function we have that 
\begin{align*}
n(R&+\delta_n) \geq H(W)\\
&\geq I(X^n,Z^n;W|Y^n)\\
&= \sum_{i=1}^n H(X_i,Z_i|Y_i) -H(X_i,Z_i|W,X^{i-1},Z^{i-1},Y^n)\\
&\overset{(a)}{\geq} \sum_{i=1}^n H(X_i,Z_i|Y_i) -H(X_i,Z_i|V_i,Y_i)\\
&\geq \sum_{i=1}^n I(X_i;V_i|Y_i),
\end{align*}
where $(a)$ follows from the definition of $V_i$,
and
\begin{align*}
n(D+\delta_n) &\geq nE[d^{(n)}(Y_p^n,g^{(n)}(W,Y^n))]\\
&= \sum_{i=1}^n E[d(Y_{p,i},g^{(n)}_i(W,Y^n))]\\
&\overset{(a)}{=} \sum_{i=1}^n E[d(Y_{p,i},g_i(V_i,Y_i))],
\end{align*}
where $(a)$ follows from the definition of $V_i$ which includes $(W,Y^{n\setminus i})$, implying that  there exists a function $g_i(\cdot)$ such that $g_i(V_i,Y_i)=g^{(n)}_i(W,Y^n)$,
and finally
\begin{align*}
n(L&+\delta_n) \geq I(Y_s^n;W,Z^n)\\
&= H(Y_s^n) - H(Y_s^n,X^n|W,Z^n) +H(X^n|W,Y_s^n,Z^n)\\
&= H(Y_s^n) - H(Y_s^n,X^n|W) +I(Y_s^n,X^n;Z^n|W)\\&\qquad +H(X^n|W,Y_s^n,Z^n)\\
&= H(Y_s^n) - H(Y_s^n,X^n|W,Y^n)-I(Y_s^n,X^n;Y^n|W) \\&\qquad +I(Y_s^n,X^n;Z^n|W)+H(X^n|W,Y_s^n,Z^n)\\
&\overset{(a)}{=} H(Y_s^n) - I(X^n;Y_s^n,Z^n|W,Y^n)-H(Y_s^n|X^n,Y^n)\\&\qquad-I(Y_s^n,X^n;Y^n|W)  +I(Y_s^n,X^n;Z^n|W)\\
&\qquad +I(X^n;Y^n|W,Y_s^n,Z^n)\\
&\overset{(b)}{=} H(Y_s^n) - I(X^n;Y_s^n,Z^n|W,Y^n)-H(Y_s^n|X^n,Y^n)\\&\qquad-I(Z^n;W)+I(Y^n;W)-I(Y_s^n,X^n;Y^n)  \\&\qquad +I(Y_s^n,X^n;Z^n)+I(X^n;Y^n|W,Y_s^n,Z^n),
\end{align*}
where $(a)$ follows from the Markov chain $Y_s^n-(X^n,Y^n)-W$ and $(b)$ follows from the Markov chain $(Y^n,Z^n)-(Y_s^n,X^n)-W$. 

Continuing the chain of inequalities, we get
\begin{align*}
&n(L+\delta_n) \\
& \overset{(c)}{\geq} \sum_{i=1}^n H(Y_{s,i})-H(Y_{s,i},Z_i|V_i,Y_i) + H(Z_i|X_i,Y_i,Y_{s,i})\\&\qquad  -I(Z_i;W|Z^{i-1})+I(Y_i;W|Y_{i+1}^n)-I(Y_{s,i},X_i;Y_i)  \\&\qquad +I(Y_{s,i},X_i;Z_i) + I(X_i;Y_i|W,Y_{i+1}^n,Y_s^n,Z^n)\\  
&\overset{(d)}{=} \sum_{i=1}^n I(Y_{s,i};V_i,Y_i)-I(Z_i;X_i|V_i,Y_i,Y_{s,i}) -I(Y_{s,i},X_i;Y_i) \\&\qquad  +I(Y_{s,i},X_i;Z_i) -I(Z_i;W,Z^{i-1}) +I(Y_i;W,Y_{i+1}^n)\\  
&\qquad +I(X_i;Y_i|T_i,Y_{s,i},Z_i)\\
&\overset{(e)}{=} \sum_{i=1}^n I(Y_{s,i};V_i,Y_i)-I(Z_i;X_i|V_i,Y_i,Y_{s,i})  -I(Y_{s,i},X_i;Y_i)  \\&\qquad  +I(Y_{s,i},X_i;Z_i) -I(Z_i;W,Y_{i+1}^n,Z^{i-1}) \\&\qquad  +I(Y_i;W,Y_{i+1}^n,Z^{i-1})+I(X_i;Y_i|T_i,Y_{s,i},Z_i)\\  
&\overset{(f)}{=} \sum_{i=1}^n I(Y_{s,i};V_i,Y_i)-I(Z_i;X_i|V_i,Y_i,Y_{s,i})  \\&\qquad -I(Y_{s,i},X_i;Y_i|U_i) +I(Y_{s,i},X_i;Z_i|U_i)\\&\qquad +I(X_i;Y_i|T_i,Y_{s,i},Z_i),
\end{align*}
where $(c)$ follows from the Markov chain $(Y_s^n,Z^n)-(X^n,Y^n)-W$, the definition of $V_i$, and the Markov chain $Y_i-(W,X_i,Y_{i+1}^n,Y_s^n,Z^n)-X^{n \setminus i}$, $(d)$ follows from  the Markov chain $Z_i-(X_i,Y_i,Y_{s,i})-V_i$, the definition of  $T_i$, and the facts that $Z_i$ is independent of $Z^{i-1}$ and $Y_i$ is independent of $Y_{i+1}^n$, $(e)$ follows from the Csiszar's sum identity \cite{ekNIT11}, and $(f)$ follows from the definition of $U_i$ and the Markov chain $(Y_i,Z_i)-(X_i,Y_{s,i})-U_i$.

We ends the proof by the standard time-sharing argument and letting $n \rightarrow \infty$ and $\delta_n \rightarrow 0$. 

\section{Proof of Gaussian with Quadratic Distortion Example}\label{app:proof_Gaussian_example}
Recalling that $X \sim \mathcal{N}(0,N_x)$, $Y_s= X + \tilde{N}_s$, where $\tilde{N}_s \sim \mathcal{N}(0,N_s) \perp X$, and $Y_p= Y_s + \tilde{N}_p$, where $\tilde{N}_p \sim \mathcal{N}(0,N_p-N_s) \perp Y_s$. 

\emph{Proof of Achievability}:  Let us choose $V=X+Q, Q\sim \mathcal{N}(0,N_q)$ independent of $X$, and choose $g(V)$ to be an MMSE estimate of $Y_p$ given $V$.
With these choices of $V$ and $g(\cdot)$, it can be shown that
\begin{align*}
I(X;V) &= h(V)-h(V|X)\\
&= \frac{1}{2}\log\bigg(\frac{N_x+N_q}{N_q}\bigg),
\end{align*}
and
\begin{align*}
I(Y_s;V)   &= h(Y_s)-h(Y_s|V)\\
&= \frac{1}{2}\log\bigg(\frac{N_x+N_s}{N_s + \frac{N_xN_q}{N_x+N_q}}\bigg),
\end{align*}
and lastly
\begin{align*}
E[d(Y_p,g(V))]&=E[(Y_p-g(V))^2]\\
&=N_p+ \frac{N_xN_q}{N_x+N_q}.
\end{align*}

Letting $D=E[d(Y_p,g(V))]=N_p+ \frac{N_xN_q}{N_x+N_q}$ and substituting it into the constraints on $R$ and $L$ complete the achievablity part.

\emph{Proof of Converse}: From the problem formulation, the joint PMF $P_{X^n,Y_p^n,Y_s^n,W,\hat{Y}_p^n}$ is given by
\begin{align*}
P_{X^n,Y_s^n}P_{Y_p^n|Y_s^n} P_{W|X^n}1_{\{\hat{Y}_p^n = g^{(n)}(W)\}}.
\end{align*}
From the EPI, we have that 
\begin{equation}
2^{\frac{2}{n}h(Y_s^n|W)} \geq 2^{\frac{2}{n}h(X^n|W)} +2^{\frac{2}{n}h(\tilde{N}_s^n)}\label{eq:EPI1}
\end{equation}
and
\begin{align}
2^{\frac{2}{n}h(Y_p^n|W)} &\geq 2^{\frac{2}{n}h(Y_s^n|W)} +2^{\frac{2}{n}h(\tilde{N}_p^n)}\label{eq:EPI2}\\
&\geq 2^{\frac{2}{n}h(X^n|W)} +2^{\frac{2}{n}h(\tilde{N}_s^n)}+2^{\frac{2}{n}h(\tilde{N}_p^n)},\label{eq:EPI3}
\end{align}
where the last inequality follows from \eqref{eq:EPI1}.
Then it follows that
\begin{align}
n(R + \delta_n) &\geq H(W )\nonumber\\
&\geq I(X^n;W)\nonumber\\
&= h(X^n)-h(X^n|W)\nonumber\\
&\geq h(X^n) \nonumber\\&\qquad - \frac{n}{2}\log(2^{\frac{2}{n}h(Y_p^n|W)} -2^{\frac{2}{n}h(\tilde{N}_s^n)}-2^{\frac{2}{n}h(\tilde{N}_p^n)}) ,\label{eq:R_converse}
\end{align}
where the last inequality follows from \eqref{eq:EPI3}.

\begin{align}
n(L + \delta_n) &\geq  I(Y_s^n;W)\nonumber\\
&= h(Y_s^n)-h(Y_s^n|W)\nonumber\\
&\geq h(Y_s^n)- \frac{n}{2}\log(2^{\frac{2}{n}h(Y_p^n|W)} -2^{\frac{2}{n}h(\tilde{N}_p^n)}), \label{eq:L_converse}
\end{align}
where the last inequality follows from \eqref{eq:EPI2}.

\begin{align}
D + \delta_n &\geq E[d^{(n)}(Y_p^n,g^{(n)}(W))]\nonumber \\
&= \frac{1}{n} \sum_{i=1}^n E[(Y_{p,i}-g_i^{(n)}(W))^2]\nonumber\\
&= \frac{1}{2\pi e}2^{\log(\frac{2 \pi e}{n} \sum_{i=1}^n E[(Y_{p,i}-g_i^{(n)}(W))^2])}\nonumber\\
&\overset{(a)}{\geq} \frac{1}{2\pi e}2^{\frac{1}{n}\sum_{i=1}^n\log(  2 \pi eE[(Y_{p,i}-g_i^{(n)}(W))^2])}\nonumber\\
&\overset{(b)}{\geq}  \frac{1}{2\pi e}2^{\frac{1}{n}\sum_{i=1}^n\log(  2 \pi e E[\text{var}(Y_{p,i}|W)])}\nonumber\\
&\geq  \frac{1}{2\pi e}2^{\frac{2}{n}\sum_{i=1}^n h(Y_{p,i}|W)}\nonumber\\
&\geq \frac{1}{2\pi e} 2^{\frac{2}{n} h(Y_p^n|W)}, \label{eq:D_converse}
\end{align}
where $(a)$ follows from Jensen's inequality \cite{ekNIT11} and the fact that $\log(\cdot)$ is a concave function, and $(b)$ follows from the fact that $E[\text{var}(Y_{p,i}|W)]$ is the MMSE over all possible estimator of $Y_{p,i}$ for each $i=1,\ldots,n$.

Combining \eqref{eq:D_converse} with  \eqref{eq:R_converse} and \eqref{eq:L_converse}, and letting $n \rightarrow \infty$ and $\delta_n \rightarrow 0$ complete the converse part.

\section{Proof of Secure Information Bottleneck Example: binary source}\label{app:proof_secureIB_binary_example}
Recalling that $X \sim \text{Bernoulli}(1/2)$, $Y_s$ is an output of a BSC($p$) with input $X$, and $Y_p$ is an output of a BSC($q$) with input $Y_s$.

\emph{Proof of Achievability}:  Let $V$ be an output of a BSC with input $X$. We have that 
\begin{align}
H(Y_p|V)&=h(h^{-1}(H(X|V))*p*q)\label{eq:secureIB_achive1}\\
&=h(h^{-1}(H(Y_s|V))*q).\label{eq:secureIB_achive2}
\end{align}
We let $D'=H(Y_p)-H(Y_p|V)$. Combining \eqref{eq:secureIB_achive1} with the constraint $R \geq H(X)-H(X|V)$ and noting that $u*q$ is an increasing function in $u \in [0,1/2]$ for some fixed $q \in [0,1/2]$, we get  $D' \leq H(Y_p)-h(h^{-1}(H(X)-R)*p*q))$. Similarly, combining \eqref{eq:secureIB_achive2} with $L \geq H(Y_s)-H(Y_s|V)$ gives $D' \leq H(Y_p)-h(h^{-1}(H(Y_s)-L)*q))$.

\emph{Proof of Converse}: From Mrs. Gerber's lemma, we have
\begin{align}
H(Y_s|V) &\geq h(h^{-1}(H(X|V))*p)\label{eq:mrsGerber1}\\
H(Y_p|V) &\geq h(h^{-1}(H(Y_s|V))*q).\label{eq:mrsGerber2}
\end{align} 
Since $h(h^{-1}(u)*q)$ is an increasing function in $u$ for $q \in [0,1/2]$, combining \eqref{eq:mrsGerber1} and \eqref{eq:mrsGerber2} gives
\begin{equation}
H(Y_p|V) \geq h(h^{-1}(H(X|V))*p*q). \label{eq:mrsGerber3}
\end{equation}

Then we have that 
\begin{align*}
D' &\leq I(Y_p;V)\\
&\overset{(a)}{\leq} H(Y_p)-h(h^{-1}(H(Y_s|V))*q)\\
&\overset{(b)}{\leq} H(Y_p)- h(h^{-1}(H(Y_s)-L)*q),
\end{align*}
where $(a)$ follows from \eqref{eq:mrsGerber2} and $(b)$ follows from the facts that $h(h^{-1}(H(Y_s|V))*q)$ is an increasing function in $H(Y_s|V)$ and that $L \geq H(Y_s)-H(Y_s|V)$.

Similarly, we have
\begin{align*}
D' &\leq I(Y_p;V)\\
&\overset{(a)}{\leq} H(Y_p)- h(h^{-1}(H(X|V))*p*q)\\
&\overset{(b)}{\leq} H(Y_p)- h(h^{-1}(H(X)-R)*p*q),
\end{align*}
where $(a)$ follows from \eqref{eq:mrsGerber3} and $(b)$ follows from the facts that $h(h^{-1}(H(X|V))*p*q)$ is an increasing function in $H(X|V)$ and that $R \geq H(X)-H(X|V)$.

\bibliographystyle{IEEEtran}
\bibliography{IEEEabrv,secure_info_bottleneck}

\begin{thebibliography}{10}
\providecommand{\url}[1]{#1}
\csname url@samestyle\endcsname
\providecommand{\newblock}{\relax}
\providecommand{\bibinfo}[2]{#2}
\providecommand{\BIBentrySTDinterwordspacing}{\spaceskip=0pt\relax}
\providecommand{\BIBentryALTinterwordstretchfactor}{4}
\providecommand{\BIBentryALTinterwordspacing}{\spaceskip=\fontdimen2\font plus
\BIBentryALTinterwordstretchfactor\fontdimen3\font minus
  \fontdimen4\font\relax}
\providecommand{\BIBforeignlanguage}[2]{{%
\expandafter\ifx\csname l@#1\endcsname\relax
\typeout{** WARNING: IEEEtran.bst: No hyphenation pattern has been}%
\typeout{** loaded for the language `#1'. Using the pattern for}%
\typeout{** the default language instead.}%
\else
\language=\csname l@#1\endcsname
\fi
#2}}
\providecommand{\BIBdecl}{\relax}
\BIBdecl

\bibitem{srpUPTI13}
L.~Sankar, S.~R. Rajagopalan, and H.~V. Poor, ``Utility-privacy tradeoffs in
  databases: An information-theoretic approach,'' \emph{IEEE Trans. Inf.
  Forensic Secur.}, vol.~8, no.~6, pp. 838--852, June 2013.

\bibitem{dfPASI12}
F.~du~Pin~Calmon and N.~Fawaz, ``Privacy against statistical inference,'' in
  \emph{Proc. Allerton Conf. Commun. Control Comput}, Oct 2012.

\bibitem{yASCP83}
H.~Yamamoto, ``A source coding problem for sources with additional outputs to
  keep secret from the receiver or wiretappers (corresp.),'' \emph{{IEEE}
  Trans. Inf. Theory}, vol.~29, no.~6, pp. 918--923, Nov 1983.

\bibitem{yCTFS94}
------, ``Coding theorems for shannon's cipher system with correlated source
  outputs, and common information,'' \emph{{IEEE} Trans. Inf. Theory}, vol.~40,
  no.~1, pp. 85--95, Jan 1994.

\bibitem{prOSDS07}
V.~Prabhakaran and K.~Ramchandran, ``On secure distributed source coding,''
  \emph{Proc. IEEE Inf. Theory Workshop}, pp. 442--447, 2007.

\bibitem{gepLCWS08}
D.~G\"{u}nd\"{u}z, E.~Erkip, and H.~V. Poor, ``Lossless compression with
  security constraints,'' in \emph{Proc. IEEE ISIT}, 2008, pp. 111--115.

\bibitem{vpSMSC13}
J.~Villard and P.~Piantanida, ``Secure multiterminal source coding with side
  information at the eavesdropper,'' \emph{{IEEE} Trans. Inf. Theory}, vol.~59,
  no.~6, pp. 3668--3692, June 2013.

\bibitem{scRDTS14}
C.~Schieler and P.~Cuff, ``Rate-distortion theory for secrecy systems,''
  \emph{{IEEE} Trans. Inf. Theory}, vol.~60, no.~12, pp. 7584--7605, Dec 2014.

\bibitem{mfUP98}
N.~Merhav and M.~Feder, ``Universal prediction,'' \emph{{IEEE} Trans. Inf.
  Theory}, vol.~44, no.~6, pp. 2124--2147, Oct 1998.

\bibitem{cwMSCU14}
T.~Courtade and T.~Weissman, ``Multiterminal source coding under logarithmic
  loss,'' \emph{{IEEE} Trans. Inf. Theory}, vol.~60, no.~1, pp. 740--761, Jan
  2014.

\bibitem{msfmFIBT14}
A.~Makhdoumi, S.~Salamatian, N.~Fawaz, and M.~Medard, ``From the information
  bottleneck to the privacy funnel,'' in \emph{IEEE Information Theory Workshop
  (ITW), 2014}, Nov 2014, pp. 501--505.

\bibitem{tpbTIBM00}
N.~{Tishby}, F.~C. {Pereira}, and W.~{Bialek}, ``{The information bottleneck
  method},'' in \emph{Proc. Allerton Conf. Commun. Control Comput}, 1999.

\bibitem{tspDLSC13}
R.~Tandon, L.~Sankar, and H.~V. Poor, ``Discriminatory lossy source coding:
  Side information privacy,'' \emph{{IEEE} Trans. Inf. Theory}, vol.~59, no.~9,
  pp. 5665--5677, Sept 2013.

\bibitem{hbRDSA85}
C.~Heegard and T.~Berger, ``Rate distortion when side information may be
  absent,'' \emph{{IEEE} Trans. Inf. Theory}, vol.~31, pp. 727--734, Nov 1985.

\bibitem{msITPI15}
B.~Moraffah and L.~Sankar, ``Information-theoretic private interactive
  mechanism,'' in \emph{Proc. Allerton Conf. Commun. Control Comput}, 2015.

\bibitem{ekNIT11}
A.~{El Gamal} and Y.~H. Kim, \emph{Network Information Theory}.\hskip 1em plus
  0.5em minus 0.4em\relax Cambridge University Press, 2011.

\bibitem{wzTRDF76}
A.~D. Wyner and J.~Ziv, ``The rate-distortion function for source coding with
  side information at the decoder,'' \emph{{IEEE} Trans. Inf. Theory}, vol.~22,
  no.~1, pp. 1--10, Jan 1976.

\bibitem{aalLSSC15}
S.~{Asoodeh}, F.~{Alajaji}, and T.~{Linder}, ``Lossless secure source coding:
  {Y}amamoto's setting,'' in \emph{Proc. Allerton Conf. Commun. Control
  Comput}, 2015.

\bibitem{wTRDF78}
A.~D. Wyner, ``The rate-distortion function for source coding with side
  information at the decoderÑ-part {II}: General sources,'' \emph{Inf.
  Control}, no.~38, pp. 60Ð--80, 1978.

\bibitem{bPRML06}
C.~M. Bishop, \emph{Pattern Recognition and Machine Learning (Information
  Science and Statistics)}.\hskip 1em plus 0.5em minus 0.4em\relax Secaucus,
  NJ, USA: Springer-Verlag New York, Inc., 2006.

\bibitem{aabgEBDM07}
T.~Andre, M.~Antonini, M.~Barlaud, and R.~Gray, ``Entropy-based distortion
  measure and bit allocation for wavelet image compression,'' \emph{{IEEE}
  Trans. Image Process.}, vol.~16, no.~12, pp. 3058--3064, Dec 2007.

\bibitem{ckOSSC13}
Y.~Chia and K.~Kittichokechai, ``On secure source coding with side information
  at the encoder,'' \emph{CoRR}, 2013, abs/1307.0974.

\bibitem{ctERSW02}
G.~Chechik and N.~Tishby, ``Extracting relevant structures with side
  information,'' Advances in Neural Information Processing Systems 15, 2002.

\bibitem{stAIB99}
N.~Slonim and N.~Tishby, ``Agglomerative information bottleneck.''\hskip 1em
  plus 0.5em minus 0.4em\relax MIT Press, 1999, pp. 617--623.

\bibitem{wmRIOG14}
A.~Winkelbauer and G.~Matz, ``Rate-information-optimal gaussian channel output
  compression,'' in \emph{48th Annual Conference on Information Sciences and
  Systems (CISS)}, March 2014.

\bibitem{ckITCT11}
I.~Csisz\'{a}r and J.~K\"{o}rner, \emph{Information Theory: Coding Theorems for
  Discrete Memoryless Systems}.\hskip 1em plus 0.5em minus 0.4em\relax Cambrige
  University Press, 2011.

\end{thebibliography}
\end{document}